\DeclareMathAlphabet{\mathcal}{OMS}{cmsy}{m}{n}
\newtheorem{definition}{Definition}
\newtheorem{lemma}{Lemma}
\newenvironment{proof}{\emph{Proof.}}{\hfill$\square$}
\newcommand{\noop}[1]{}
\newlength\savedintextsep
\newcommand{\Language}[1]{#1\xspace}
\newcommand{\LTS}[0]{\Language{LTS}}
\newcommand{\LTSs}[0]{\Language{LTSs}}
\newcommand{\LTSfunc}{\textrm{LTS}}
\newcommand{\cLTS}[1]{\ensuremath{\LTSfunc_{#1}}}
\newcommand{\graph}{\ensuremath{\mathcal{G}}\xspace}
\newcommand{\conflict}{\ensuremath{\mathcal{C}}\xspace}
\newcommand{\Xgraph}{\ensuremath{\mathcal{X}}\xspace}
\newcommand{\states}[1]{\ensuremath{{\mathcal{S}_{#1}}}\xspace}
\newcommand{\transitions}[1]{\ensuremath{{\mathcal{T}_{#1}}}\xspace}
\newcommand{\actions}[1]{\ensuremath{{\mathcal{A}_{#1}}}\xspace}
\newcommand{\ltstuple}[1]{\ensuremath{\langle \states{#1}, \actions{#1}, \transitions{#1} \rangle}\xspace}
\newcommand{\vectornot}[1]{\bS*{#1}}
\newcommand{\bisimilar}[1][]{%
  \setbox0=\hbox{\kern-.1ex{$\leftrightarrow$}\kern-.1ex}
  \setbox1=\vbox{\hbox{\raise .1ex \box0}\hrule}%
  \ensuremath{\mathrel{\hbox{\kern.1ex\box1\kern.1ex}_{#1}}}
}
\newcommand{\Llts} {\ensuremath{\mathcal{L}}}
\newcommand{\Rlts} {\ensuremath{\mathcal{R}}}
\newcommand{\Klts}{\ensuremath{\mathcal{K}}}
\newcommand{\graphresult}{\ensuremath{\mathcal{H}}}
\newcommand{\rz}{\ensuremath{{r_0}}}
\newcommand{\ro}{\ensuremath{{r_1}}}
\newcommand{\tr}{\ensuremath{{\it tr}}}
\newcommand{\bigo} {\ensuremath{\mathcal{O}}}
\begin{document}

\pagestyle{plain}
%\title{Confluence Detection of Transformation Systems with Weakly Connected Labelled Digraphs}
\title{Confluence Detection for Transformations of Labelled Transition Systems}
\author{Anton Wijs
\institute{
 Department of Mathematics and Computer Science\\
 Eindhoven University of Technology\\
 P.O. Box 513, 5600 MB, Eindhoven, The Netherlands\\
 \email{A.J.Wijs@tue.nl}
}
}

\maketitle

\begin{abstract}
The development of complex component software systems can be made more manageable by first creating an abstract model and then incrementally adding details. Model transformation is an approach to add such details in a controlled way. In order for model transformation systems to be useful, it is crucial that they are confluent, i.e.\ that when applied on a given model, they will always produce a unique output model, independent of the order in which rules of the system are applied on the input. In this work, we consider Labelled Transition Systems (LTSs) to reason about the semantics of models, and LTS transformation systems to reason about model transformations. In related work, the problem of confluence detection has been investigated for general graph structures. We observe, however, that confluence can be detected more efficiently in special cases where the graphs have particular structural properties. In this paper, we present a number of observations to detect confluence of LTS transformation systems, and propose both a new confluence detection algorithm and a conflict resolution algorithm based on them.
\end{abstract}

%\begin{abstract}
%In earlier work, we have presented a technique to determine whether formal definitions of so-called behavioural model transformations preserve particular functional properties, for input models in general. These formal definitions are pairs of Labelled Transition Systems (LTSs), expressing, as in graph transformation theory, how specific LTS patterns should be transformed, conform the Double Push-Out approach. Such transformation rules can be applied on models for which the semantics can be captured using LTSs. However, it is well-known that for model transformation, confluence of rule systems is a required property. Otherwise, transformation may non-deterministically lead to different end results. For confluence detection, the notion of critical pair is important, and to detect critical pairs, conflicts between direct derivations should be identifiable, In this paper, we investigate conflict detection for LTS transformation systems, and propose an algorithm to detect conflicts between pairs of transformation rules.
%\end{abstract}
%\keywords{Labelled Transition System, Graph Transformation, Confluence}

\section{Introduction}
\label{sec:intro}
% !TEX root = ltsconfluence.tex

In Model-Driven Software Development, model transformation is a well-known technique to incrementally construct complex, often concurrent systems through manageable steps. It allows reasoning about a system at a high level of abstraction, and incrementally adding more information until a model has been constructed from which source code can be automatically derived. Some transformations add details or components to an existing model of a system under development, others refactor a model to make it easier to interpret, or translate a model to one written in a different modelling language. To reason about model transformations, often graph transformation is chosen as the underlying mechanism~\cite{graphformodeltrans,mtgt.comparative}.

It is crucial, though, that transformations are verifiable, i.e.\ that the definitions of transformations can be qualitatively analysed. Much work has been done on verifying model transformations, e.g.~\cite{narayanan.karsai.towardsverifyingmt,hulsbusch.konig.rensink.modeltransformation}, using many different techniques~\cite{amrani.overview,survey.mt}. In earlier work, we have developed a formal verification technique to determine whether the definition of a model transformation preserves specific safety or liveness properties, \emph{regardless} of the model it is applied on~\cite{engelen.wijs.modevva,wijs.facs13,wijs.engelen.tacas,wijs.engelen.refiner}. It is applicable on any modelling language with a formal semantics that can be captured by \emph{Labelled Transition Systems} (\LTSs), i.e.\ it must be action (or event) based. For example, in our research we focus on the visual modelling language SLCO~\cite{engelen.thesis}, which allows the specification and development of concurrent and distributed systems by defining sets of (interacting) finite state machines.

In our setting, the semantics of models is captured by \LTSs, and the semantics of transformations is captured by systems of pairs of \LTSs, describing which patterns in an input \LTS should be transformed into which new patterns. By applying the technique from~\cite{engelen.wijs.modevva,wijs.facs13,wijs.engelen.tacas,wijs.engelen.refiner} on those pairs of \LTSs, we are able to determine whether transformation is guaranteed to preserve the structure of any \LTS w.r.t.\ a particular temporal logic formula expressing a desired functional property. If that is the case, then models for which this property holds can be safely transformed.

For example, consider the \LTS $C_0$ on the left in Figure~\ref{fig:example}. It describes a system that can alternatingly receive messages $m$ and perform computations. The transformation rule $r_0$ in the middle defines that after each receiving of a message, a postprocessing step should be added. The result of applying the rule on the \LTS on the left in the figure, produces the \LTS $T(C_0)$ on the right. More interesting cases involve multiple \LTSs describing the semantics of different components running concurrently, and interaction between those components is taken into account.

This setting allows to formally reason about model transformations, which has a number of practical applications. For instance, if one desires to develop a system running on specific hardware, then an abstract model can be transformed to make it compatible with that hardware. In~\cite{wijs.facs13}, a transformation rule system is given to transform multi-party communication in models, i.e.\ involving more than two parties at once, into a number of two-party communications following a specific protocol. Multi-party communication can be useful in modelling languages to reason about system behaviour at an abstract level, but if the eventual implementation cannot use it, then such a transformation rule system is useful to automatically remove it at some point in the development process. The ability to verify the transformation rule system in isolation means that we are able to determine that it will always produce a correct output model when applied on a correct input model. Other elaborate examples of \LTS transformations are given in~\cite{wijs.engelen.tacas,wijs.engelen.refiner}.

\begin{figure}[t]
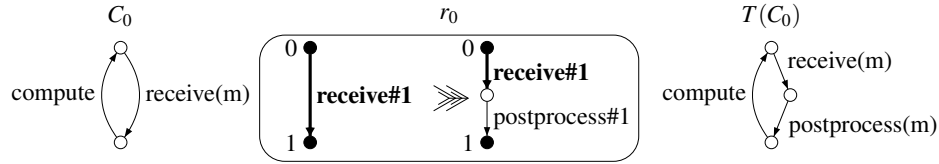

\centering
\scalebox{0.84}{
\begin{gpicture}(100,100)(0,0)
	%\node[Nw=2,Nh=2,Nmarks=i](s0)(20,20){}
	\node[Nw=2,Nh=2,linecolor=white](c0)(20,25){$C_0$}
	\node[Nw=2,Nh=2](s0)(20,20){}
	\node[Nw=2,Nh=2](s1)(20,5){}
	\drawedge[curvedepth=3](s0,s1){{receive(m)}}
	\drawedge[curvedepth=3](s1,s0){{compute}}

	\node[Nw=60,Nh=20](box)(72,12){}

	\node[Nw=2,Nh=2,linecolor=white](r0)(72,25){$r_0$}
	\node[Nw=2,Nh=2,linecolor=white](t1)(47,20){0}
	\node[Nw=2,Nh=2,fillcolor=black](p00)(50,20){}
	\node[Nw=2,Nh=2,fillcolor=black](p01)(50,5){}
	\node[Nw=2,Nh=2,linecolor=white](t1)(47,5){1}
	\drawedge[linewidth=0.5,AHangle=30](p00,p01){{\bf receive\#1}}

	%to
	\drawline[AHnb=2,AHangle=30,AHLength=4,AHlength=0,AHdist=2,ATnb=2,ATangle=150,
					ATLength=2,ATlength=0,ATdist=2](72,12)(75,12)

	\node[Nw=2,Nh=2,linecolor=white](t1)(75,20){0}
	\node[Nw=2,Nh=2,fillcolor=black](p10)(78,20){}
	\node[Nw=2,Nh=2](p11)(78,12.5){}
	\node[Nw=2,Nh=2,fillcolor=black](p12)(78,5){}
	\node[Nw=2,Nh=2,linecolor=white](t1)(75,5){1}
	\drawedge[linewidth=0.5,AHangle=30](p10,p11){{\bf receive\#1}}
	\drawedge(p11,p12){{postprocess\#1}}

	%\node[Nw=2,Nh=2,Nmarks=i](t0)(123,20){}
	\node[Nw=2,Nh=2,linecolor=white](c0p)(123,25){$T(C_0)$}
	\node[Nw=2,Nh=2](t0)(123,20){}
	\node[Nw=2,Nh=2](t1)(123,5){}
	\node[Nw=2,Nh=2](t2)(126,12.5){}
	\drawedge[ELpos=75](t0,t2){{receive(m)}}
	\drawedge[ELpos=5](t2,t1){{postprocess(m)}}
	\drawedge[curvedepth=3](t1,t0){{compute}}
	
\end{gpicture}
}
\caption{Example of an LTS transformation}
\label{fig:example}
\end{figure}

For model transformation, it is crucial that the transformation is always \emph{terminating} and \emph{confluent}, i.e.\ that transformation is guaranteed to finish, and that it always leads to the same solution, i.e.\ reduct, independent of the order in which matches are processed. This is important, since a user defining how a particular system should be transformed typically has a specific resulting model in mind. Therefore, if a rule system is not confluent, it usually means that the user made some mistake. Both termination and confluence of transformation systems has been studied before; for instance, in~\cite{termination}, criteria are given to determine whether a system is terminating or not.

Confluence has been the subject of research in for example~\cite{lambers.ehrig.orejas.2005,lambers.ehrig.orejas.2008,plump.first,plump.confluence05,plump.confluence2010}. From term rewriting, we know that a rewrite system is confluent if it is both terminating and locally confluent~\cite{huet.confluence}. A system is locally confluent iff all possible conflicts between two transformation applications, i.e.\ direct transformations, can be resolved according to the Critical Pair Lemma~\cite{plump.first,plump.confluence05,plump.confluence2010}. For a terminating system, determining confluence therefore boils down to doing two operations: firstly, to construct so-called \emph{critical pairs} representing possible conflicts between two direct transformations, and secondly, to try to resolve all constructed critical pairs.

In earlier work on critical pair detection, general graphs have always been considered, meaning that vertices and edges may or may not have labels, edges may or may not be directed, and graphs can consist of several disconnected subgraphs. Such a general approach is of course very useful, but when considering a more restricted setting, it may be possible to detect critical pairs more efficiently. The complexity of standard critical pair detection for general graph structures is exponential in the number of vertices and edges in the left patterns of two transformation rules, since all possible overlappings between the two left patterns need to be taken into account. In~\cite{lambers.ehrig.orejas.2005}, it is demonstrated that if one transformation rule deletes elements and the other does not, a check with a linear complexity can be obtained, and when both rules do not delete elements, a check with quadratic complexity is possible.

The contributions of this paper follow from the observation that in our setting, graphs are directed, edge-labelled graphs, i.e.\ \LTSs, with all vertices connected with each other via edges (ignoring the direction). These structural properties can be exploited further, along the reasoning of~\cite{lambers.ehrig.orejas.2005}, to find critical pairs more efficiently; the presence of edge labels allows to check in constant time in some cases, and furthermore, we are also able to define a check for critical pairs with quadratic complexity for the case that two rules both delete elements. The main conclusion for our setting is that transitions, as opposed to states, turn out to play a crucial role in the detection of critical pairs.% for efficient conflict detection.

\paragraph*{Contribution}
In this paper, we propose a new critical pair detection algorithm when working with \LTSs as opposed to general graphs. When formally reasoning about model transformations as transformations of \LTSs, we can immediately benefit from the new algorithm since it can handle many cases more efficiently than existing detection techniques~\cite{lambers.ehrig.orejas.2005,lambers.ehrig.orejas.2008,plump.first,plump.confluence05,plump.confluence2010}. It uses a novel approach, constructing partial morphisms between \LTSs. Whenever such a partial morphism meets certain requirements, a conflict can be directly derived from it. Although the worst-case complexity of the algorithm is comparable to that of algorithms proposed in related work, it is very efficient in particular cases. The circumstances of those cases are explained in detail. Besides that, we also propose an algorithm to try to resolve detected conflicts, which is based on observations made in~\cite{plump.confluence2010}, but we focus on our particular setting.

\paragraph*{Roadmap}

Section~\ref{sec:prelim} presents the basic notions, in particular LTS and LTS transformation. In Section~\ref{sec:confluence}, we investigate conflict detection. From this, we construct a conflict detection algorithm in Section~\ref{sec:alg} and a conflict resolution algorithm in the same section. Finally, Section~\ref{sec:conclusions} contains our conclusions and pointers for future work.

\section{Background}
\label{sec:prelim}
% !TEX root = ltsconfluence.tex

In this paper, we focus on action-based semantics of (concurrent) systems. Such semantics are often captured using \emph{Labelled Transition Systems} (\LTSs), indicating how a system as a whole or an individual component in a system can change state by performing particular actions.

\begin{definition}[Labelled Transition System]
An LTS~\graph is a tuple $\langle \states{\graph},$ $\actions{\graph},$ $\transitions{\graph} \rangle$, where \states{\graph} is a (finite) set of states, $\actions{\graph}$ is a set of actions, and $\transitions{\graph} \subseteq \states{\graph} \times \actions{\graph} \times \states{\graph}$ is a transition relation.
%Usually, LTSs representing potential behavior of concurrent systems have only one initial state.
%Here, we support multiple initial states to enable representing transformation rule patterns in terms of LTSs, as described in Section~\ref{sec:lts-transformation:transformations}.
%In all other cases, LTSs have a single initial state.
Actions in \actions{\graph} are denoted by $a$, $b$, $c$, etc.
We use $s_1 \xrightarrow{a}_\graph s_2$ to denote $\langle s_1, a, s_2 \rangle \in \transitions{\graph}$.
If $s_1 \xrightarrow{a}_\graph s_2$, this means that in~\graph, an action $a$ can be performed in state~$s_1$, leading to state~$s_2$.
\end{definition}

%Note that a state $s$ can be interpreted as an \LTS $\langle \{ s \}, \emptyset, \emptyset, \{ s \} \rangle$, and a transition $s_1 \xrightarrow{a} s_2$  as an \LTS $\langle \{ s_1, s_2 \}, \{ a \}, s_1 \xrightarrow{a} s_2, \{ s_1 \} \rangle$. We use underlining of states to indicate which states are initial, so, e.g., \smash{$\underline{s_1} \xrightarrow{a} \underline{s_2}$}  represents \smash{$\langle \{ s_1, s_2 \}, \{ a \}, s_1 \xrightarrow{a} s_2, \{ s_1, s_2 \} \rangle$}.

%In addition to $\states{\graph}$ being finite, we also assume that $\actions{\graph}$ is minimal, in the sense that for each $a \in \actions{\graph}$, there exist $s, s' \in \states{\graph}$ with $s \xrightarrow{a}_{\graph} s'$. Furthermore, transitions can be uniquely identified by the triple $\langle s, a, s' \rangle$, i.e.\ no two transitions have the same source state, label, and target state. In the context of \LTSs, this makes sense, since multiple transitions that are exactly the same do not add any insight into the potential behaviour of a system, compared to having just one of those transitions.

We use operations on \LTSs such as intersection and difference in the usual graph-theoretical way.
Finally, an \LTS is \emph{weakly connected} iff the undirected version of an \LTS is a single connected component (from each state, there is path to each other state). %In other words, such an \LTS really consists of one directed graph.

\begin{figure}[t]
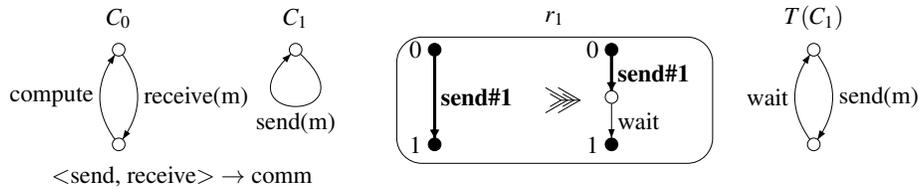

\centering
\scalebox{0.84}{
\begin{gpicture}(100,100)(0,0)
	%\node[Nw=2,Nh=2,Nmarks=i](s0)(20,20){}
	\node[Nw=2,Nh=2,linecolor=white](c1)(23,25){$C_1$}
	\node [Nw=2,Nh=2](t0)(23,20){}
	\drawloop[loopangle=270](t0){{send(m)}}

	\node[Nw=2,Nh=2,linecolor=white](c0)(-5,25){$C_0$}
	\node[Nw=2,Nh=2](s0)(-5,20){}
	\node[Nw=2,Nh=2](s1)(-5,5){}
	\drawedge[curvedepth=3](s0,s1){{receive(m)}}
	\drawedge[curvedepth=3](s1,s0){{compute}}
	\node[Nw=5,Nh=3,Nmr=0,linecolor=white](tl1)(5,0){$<$send, receive$>$ $\to$ comm}

	\node[Nw=50,Nh=20](box)(64,12){}

	\node[Nw=2,Nh=2,linecolor=white](c0)(64,25){$r_1$}

	\node[Nw=2,Nh=2,linecolor=white](t1)(42,20){0}
	\node[Nw=2,Nh=2,fillcolor=black](p00)(45,20){}
	\node[Nw=2,Nh=2,fillcolor=black](p01)(45,5){}
	\node[Nw=2,Nh=2,linecolor=white](t1)(42,5){1}
	\drawedge[linewidth=0.5,AHangle=30](p00,p01){{\bf send\#1}}

	%to
	\drawline[AHnb=2,AHangle=30,AHLength=4,AHlength=0,AHdist=2,ATnb=2,ATangle=150,
					ATLength=2,ATlength=0,ATdist=2](65,12)(68,12)

	\node[Nw=2,Nh=2,linecolor=white](t1)(70,20){0}
	\node[Nw=2,Nh=2,fillcolor=black](p10)(73,20){}
	\node[Nw=2,Nh=2](p11)(73,12.5){}
	\node[Nw=2,Nh=2,fillcolor=black](p12)(73,5){}
	\node[Nw=2,Nh=2,linecolor=white](t1)(70,5){1}
	\drawedge[linewidth=0.5,AHangle=30](p10,p11){{\bf send\#1}}
	\drawedge(p11,p12){{wait}}

	\node[Nw=2,Nh=2,linecolor=white](c1p)(105,25){$T(C_1)$}
	%\node[Nw=2,Nh=2,Nmarks=i](t0)(123,20){}
	\node[Nw=2,Nh=2](tt0)(105,20){}
	\node[Nw=2,Nh=2](tt1)(105,5){}
	\drawedge[curvedepth=3](tt0,tt1){{send(m)}}
	\drawedge[curvedepth=3](tt1,tt0){{wait}}

\end{gpicture}
}
\caption{Transforming networks of \LTSs}
\label{fig:transnet}
\end{figure}

In the context of systems consisting of a finite number of concurrent components, we actually represent system semantics as \emph{networks of \LTSs}~\cite{lang05}, where the potential behaviour of each component is described by a separate \LTS, and a synchronisation mechanism is defined describing the potential for those \LTSs to interact. For example, consider the network on the left in Figure~\ref{fig:transnet}, which besides \LTS $C_0$ from Figure~\ref{fig:example} also contains the potential behaviour of a component $C_1$ that can at any time send a message. Below the \LTSs, a synchronisation rule is defined stating that {\it send} and {\it receive} actions can synchronise, leading to a {\it comm} action in the \LTS of the system. For this to be possible, the parameters of {\it send} and {\it receive} must be identical, i.e.\ they must involve the same message $m$.

When considering transformation rule systems applicable on networks of \LTSs, confluence depends on whether the rules in the rule system do not give rise to conflicts in either of the individual \LTSs, so it again boils down to considering single \LTSs. For this reason, we do not consider networks of \LTSs in most of this paper.

\paragraph*{Transformation}

In our setting, changes applied on a concurrent system model are represented by \LTS \emph{transformation rules} applied on the semantics of the components of the model, i.e.\ on their \LTSs. We only consider weakly connected \LTSs as the semantics of components, since naturally, the possible states that components can be in should be reachable from their initial state. To reason about the changes, we define the notions of a rule, and matches of rules on component \LTSs. But first, we introduce the notion of \emph{\LTS morphisms}.

\begin{definition}[\LTS morphism]
\label{def:ltsmorphism}
An \emph{\LTS morphism} $f : \graph_0 \to \graph_1$ between two \LTSs $\graph_0 = \langle \states{\graph_0}$, $\actions{\graph_0}$, $\transitions{\graph_0} \rangle$, $\graph_1 = \ltstuple{\graph_1}$ is a pair of functions $f = \langle f_\states{}: \states{\graph_0} \to \states{\graph_1}, f_\transitions{}: \transitions{\graph_0} \to \transitions{\graph_1} \rangle$ which preserve sources, targets, and transition labels, i.e.\ for all $s \xrightarrow{a}_{\graph_0} s'$, we have $f_\transitions{}(s \xrightarrow{a}_{\graph_0} s') = f_\states{}(s) \xrightarrow{a}_{\graph_1} f_\states{}(s')$.
\end{definition}

We denote the existence of an \emph{injective} LTS morphism $f$ from an \LTS $\graph_0$ to an \LTS $\graph_1$, meaning that $f_{\states{}}$ and $f_{\transitions{}}$ are injective, by $\graph_0 \sqsubseteq \graph_1$, and say that $\graph_0$ and $\graph_1$ are \emph{isomorphic}, denoted by $\graph_0 \simeq \graph_1$, iff there exists a morphism $f: \graph_0 \to \graph_1$ such that both $f_\states{}$ and $f_\transitions{}$ are bijections. An \emph{\LTS inclusion} $i : \graph_0 \to \graph_1$ is an \LTS morphism with for all $s \in \states{\graph_0}$, $i_\states{}(s) = s$, and for all $s \xrightarrow{a}_{\graph_0} s'$, $i_\transitions{}(s \xrightarrow{a}_{\graph_0} s') = s \xrightarrow{a}_{\graph_1} s'$. \LTS $\graph_0$ is a sub-\LTS of $\graph_1$, denoted by $\graph_0 \subseteq \graph_1$, iff there exists an \LTS inclusion $i : \graph_0 \to \graph_1$.
Finally, we denote the fact that a morphism is undefined for a particular state or transition with $\perp$, for example $f_{\states{}}(s) = \perp$ means that $f_{\states{}}(s)$ is undefined.

\begin{definition}[Transformation Rule]
\label{def:lts-transformation:transformationrule}
A \emph{transformation rule} $r = \langle \Llts \xleftarrow{f} \Klts \xrightarrow{g} \Rlts \rangle$ consists of two \LTS morphisms $f:\Klts \to \Llts$, $g: \Klts \to \Rlts$, where $\Klts \to \Llts$ is an inclusion. \LTSs $\Llts$ and $\Rlts$ are both weakly connected, and are called the \emph{left} and \emph{right patterns} of $r$. \LTS $\Klts$ is the \emph{interface}.
\end{definition}

We consider injective transformation rules, meaning that $\Klts \to \Rlts$ is injective. With $\states{\Llts \setminus \Klts}$, we refer to the states $s$ in \Llts\ that are not represented in \Klts, i.e.\ $f_{\states{}}^{-1}(s) = \perp$. A similar convention is used for the functions $f_{\transitions{}}$, $g_{\states{}}$, and $g_{\transitions{}}$. States $s \in \states{\Llts}$ for which $f_{\states{}}^{-1}(s)$ is defined are called \emph{glue-states}. %, and $\states{\Llts \cap \Klts}$ is the set of states that are both in \Llts\ and \Klts

\begin{definition}[Rule Match]
\label{def:lts-transformation:transmatch}
A transformation rule $r = \langle \Llts \xleftarrow{f} \Klts \xrightarrow{g} \Rlts \rangle$ has a \emph{match} $m: \Llts \to \graph$ on an \LTS~$\graph = \ltstuple{\graph}$ iff $m: \Llts \to \graph$ is an injective \LTS morphism and $\forall s \in \states{\Llts \setminus \Klts}, p \in \states{\graph}:$
	\begin{itemize}[noitemsep]
	%\item $m_r(s) = p \implies \neg\exists s' \in \initialstates{\Llts^r}. m_r(s') = p$;
 	\item $m_\states{}(s) \xrightarrow{a}_\graph p \implies \exists s' \in \states{\Llts}. s\xrightarrow{a}_{\Llts} s' \wedge m_\states{}(s') = p$;
 	\item $p \xrightarrow{a}_\graph m_\states{}(s) \implies \exists s' \in \states{\Llts}. s'\xrightarrow{a}_{\Llts} s \wedge m_\states{}(s') = p$.
	%\item[(I)] $m_\states{}(s) = p \implies p \not\in \initialstates{\graph}$;
	\end{itemize}
\end{definition}

The conditions in Def.~\ref{def:lts-transformation:transmatch} correspond with the \emph{gluing conditions} of the \emph{double-pushout} (DPO) method \cite{gt.book} for graph transformation, preventing so-called \emph{dangling transitions}, which are transitions where only the source or target state will be removed, but not both.
It expresses that for a state $s$ to be removed, all connected transitions must be removed as well. %Condition I states that no initial state of $\graph$ may be removed through transformation, ruling out the possibility of removing an initial set of states.

Let $\graph$, $\graphresult$ be \LTSs, and $m: \Llts \to \graph$ a match for rule $r$. Then $\graph$ \emph{directly transforms} $\graphresult$ by $r$ and $m$, denoted by $\graph \Rightarrow_{r,m} \graphresult$, iff there are two pushouts as in Figure~\ref{fig:DPO}.

%{\bf To do: talk about glue-transitions, figure DPO}

\begin{figure}[t]
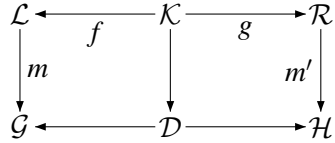

\centering
\begin{gpicture}(100,100)(0,0)
	\node[Nw=4,Nh=4,Nmr=0,linecolor=white](lr)(20,20){$\Klts$}
	\node[Nw=4,Nh=4,Nmr=0,linecolor=white](l)(0,20){$\Llts$}
	\node[Nw=4,Nh=4,Nmr=0,linecolor=white](r)(40,20){$\Rlts$}
	\node[Nw=4,Nh=4,Nmr=0,linecolor=white](g)(0,5){$\graph$}
	\node[Nw=4,Nh=4,Nmr=0,linecolor=white](d)(20,5){$\mathcal{D}$}
	\node[Nw=4,Nh=4,Nmr=0,linecolor=white](h)(40,5){$\graphresult$}
	
	\drawedge(lr,l){$f$}
	\drawedge[ELside=r](lr,r){$g$}
	\drawedge(l,g){$m$}
	\drawedge(d,g){}
	\drawedge(d,h){}
	\drawedge[ELside=r](r,h){$m'$}
	\drawedge(lr,d){}
%	\drawedge[curvedepth=-7,dash={1.5}0,AHlength=1,AHLength=4](gt1,gp1){}

\end{gpicture}
\caption{Double-pushout diagram}
\label{fig:DPO}
\end{figure}

Direct transformation is defined as follows, with $m':\Rlts \to \graphresult$ a match between the right pattern and the result of the transformation.

\begin{definition}[Direct Transformation]
\label{def:lts-transformation:transformation}
The \emph{direct transformation} $\graph \Rightarrow_{r,m} \graphresult$ of an LTS~$\graph = \ltstuple{\graph}$ according to a rule~$r$ and a given match $m:\Llts \to \graph$ is defined as
$\graphresult = \langle \states{\graphresult}, \actions{\graphresult}, \transitions{\graphresult} \rangle$, where
\begin{itemize}[noitemsep]
\item
$\states{\graphresult} =
( \states{\graph}
  \setminus
  \{m_\states{}(s) \mid s \in (\states{\Llts \setminus \Klts}) \}
)
\cup
(\states{\Rlts \setminus \Klts})$;

\item
$\transitions{\graphresult} =
(
  \transitions{\graph}
  \setminus
  \{ m_\transitions{}(\langle s, a, s'\rangle) \mid s\xrightarrow{a}_{\Llts \setminus \Klts} s' \}
)
\cup
\{ m_\transitions{}'(\langle s, a, s' \rangle) \mid s\xrightarrow{a}_{\Rlts \setminus \Klts} s' \}
$;
\item
$\actions{\graphresult} = \{ a \mid \exists \langle s, a, s' \rangle \in \transitions{\graphresult} \}$.
\end{itemize}
%Finally, if $\initialstates{\graph}$ is defined, then $\initialstates{\graphresult} = \initialstates{\graph}$.
\end{definition}

The new set of states \states{\graphresult} consists of \states{\graph} without the states that correspond to the states in the left pattern that are not represented in the interface, i.e.\ the removed states, and with new representatives, here represented in the match $m'$, of the states in the right pattern that are not represented in the interface, i.e.\ the newly added states.
%We assume that the latter states are fresh in \states{\graphresult}.
In a similar way, $\transitions{\graphresult}$ consists of the transitions in $\transitions{\graph}$ without the transitions corresponding to left pattern transitions that are not represented in the interface, and with transitions corresponding to right pattern transitions that are not represented in the interface.

%When a left pattern is matched on an \LTS, the result is that each state matched by a glue-state still exists after transformation, each state matched by a non-glue-state is removed, and each non-glue-state in a right pattern has resulted in appropriate representatives for each match of the left pattern.

An example of a transformation rule introducing a new action is given in the middle of Figure~\ref{fig:example}. Black states with the same index correspond with each other, i.e.\ for two such states $s \in \states{\Llts}$, $t \in \states{\Rlts}$, we have $g_{\states{}}(f_{\states{}}^{-1}(s)) = t$. This also holds for the highlighted transition labelled \emph{receive(m)}.

It is crucial to note at this point that we expect transformation rules to be \emph{well-specified}, i.e.\ that they specify that input is actually altered, and not replaced by something that can be considered equivalent. In particular, we assume that a transition $s \xrightarrow{a}_\graph s$ is not replaced by a new transition between states $s$ and $s'$ with the same label $a$, nor that a state $s$ is replaced by another state $\hat s$ without transforming any of the transitions connected to $s$ (note that \LTSs to be transformed are weakly connected, so $s$ always has connected transitions).

Sets of rules together make up a \emph{rule system} $\Sigma$. Transformation of an \LTSs $\graph$ according to a rule system $\Sigma$ involves identifying all possible matches for each $r \in \Sigma$ on $\graph$, and applying transformation on those matches. A \emph{transformation} from $\graph$ to $\graphresult$ is a sequence of direct transformations $\graph = \graph_0 \Rightarrow \ldots \Rightarrow \graph_n = \graphresult$, with $n \geq 0$. We denote this by $\graph \Rightarrow^*_\Sigma \graphresult$. %, as is denoted by $\graph \Rightarrow_\Sigma^* \graphresult$.

Figure~\ref{fig:transnet} presents a transformation rule $r_1$ which can only be applied on $C_1$, leading to \LTS $T(C_1)$. When combining $r_1$ with $r_0$ into a rule system $\Sigma$, it is clear that $\Sigma$ is confluent w.r.t.\ the network given in Figure~\ref{fig:transnet}, since $r_0$ and $r_1$ are not applicable on the same \LTSs. The question that remains is whether $\Sigma$ is confluent for arbitrary single \LTSs as well.

%({\bf show examples})

%Refining a network entails processing all the matches of each transformation rule, and defining the new synchronisation rule set as $\synchrules \cup \hat\synchrules$. For the example, 

\section{Conflicts Between Direct Transformations}
\label{sec:confluence}
% !TEX root = ltsconfluence.tex

By Newman's Lemma~\cite{newman}, a terminating transformation system is confluent iff it is \emph{locally confluent}, i.e.\ if for all direct transformations $\graphresult_0 \Leftarrow_{\rz, m_0} \graph \Rightarrow_{\ro, m_1} \graphresult_1$, there is a common reduct $\graphresult$ with $\graphresult_0 \Rightarrow^*_\Sigma \graphresult$ and $\graphresult_1 \Rightarrow^*_\Sigma \graphresult$. To determine local confluence, first of all, it has been shown~\cite{plump.first} that if two direct transformations are \emph{parallel independent}, then they are locally confluent. In the following, we reason about two \LTS transformation rules $\rz = \langle \Llts^\rz, \Rlts^\rz \rangle$ and $\ro = \langle \Llts^\ro, \Rlts^\ro \rangle$.

%{\bf Talk about Church-Rosser? Local confluence $\implies$ critical pairs}

\begin{definition}[Parallel Independence]
\label{def:parind}
Direct transformations $\graphresult_0 \Leftarrow_{\rz,m_0} \graph$ $\Rightarrow_{\ro, m_1} \graphresult_1$ are \emph{parallel independent} iff
\[
m_0(\Llts^\rz) \cap m_1(\Llts^\ro) \subseteq m_0(\Klts^\rz) \cap m_1(\Klts^\ro)
\]
\end{definition}

% Define confluence (Plump)?

The intuition behind Def.~\ref{def:parind} is that if two matches of one or two rules on an \LTS only overlap w.r.t.\ the interfaces of those two rules, then the related direct transformations are parallel independent since applying one direct transformation does not invalidate the match for the other direct transformation.
We say that two direct transformations are \emph{in conflict} iff they are not parallel independent. The presence of a conflict can cause the transformation system to be not locally confluent (specifically, if the two derivations do not lead to a common reduct~\cite{gt.book}). Informally, the conflict is caused because $\rz$ deletes something that $\ro$ uses and/or $\ro$ deletes something that $\rz$ uses. A concrete conflict can be represented by a \emph{critical pair}, which defines an \LTS on which two matches of the given pair of rules exist that imply derivations that are in conflict. Clearly, in such an \LTS, the two matches must overlap. %An overlap consists of at least a single state. We call $\graph_0 \Cup_\graph^{f,g} \graph_1$ the union of two \LTS structures isomorphic to $\graph_0$ and $\graph_1$ w.r.t.\ an overlap $\graph$, as defined by two \LTS morphisms \smash{$f: \graph_0 \to \graph_0 \Cup_\graph^{f,g} \graph_1$}, $g: \graph_1 \to \graph_0 \Cup_\graph^{f,g} \graph_1$, with \smash{$g_{\states{}}(\graph_0) \cup g_{\states{}}(\graph_1) = \states{\graph_0 \Cup_\graph^{f,g} \graph_1}$}, \smash{$g_{\transitions{}}(\graph_0) \cup g_{\transitions{}}(\graph_1) = \transitions{\graph_0 \Cup_\graph^{f,g} \graph_1}$}, and concerning $\graph$, $f(\graph_0) \cap g(\graph_1) = \graph$.

\begin{definition}[Critical Pair]
\label{def:critpair}
Direct transformations $\graphresult_0 \Leftarrow_{\rz, m_0} \graph \Rightarrow_{\ro, m_1} \graphresult_1$ form a \emph{critical pair} iff they are not parallel independent and $\graph = m_0(\Llts^\rz) \cup m_1(\Llts^\ro)$.
\end{definition}

Furthermore, we require that $m_0 \neq m_1$ if $\rz = \ro$, and we equate isomorphic critical pairs, which informally means that two critical pairs are different if either there exists no isomorphism between their $\graph$'s, or their matches $m_0$, $m_1$ are different.

The main task when detecting conflicts is to construct a suitable \emph{conflict situation} $\graph$ for pairs of rules $\rz$, $\ro$ that gives rise to a conflict. Such a $\graph$ should be minimal, in the sense that there does not exist an \LTS $\graph'$ with $\graph' \subset \graph$ and matches $m_0':\Llts^\rz \to \graph'$, $m_1': \Llts^\ro \to \graph'$ such that $\graphresult_0' \Leftarrow_{\rz, m_0'} \graph' \Rightarrow_{\ro, m_1'} \graphresult_1'$ is also a critical pair.

%Likewise, we call two rules $\rz$ and $\ro$ parallel independent iff for all possible input $\graph$, all possible derivations are parallel independent.
%In order to detect that two rules are locally confluent, it suffices to show that they are parallel independent, i.e.\ that there cannot exist matches of their left patterns on some $\graph$ that overlap beyond the interfaces.

%In fact, for left patterns with only reachable states, it suffices to show that the rules are parallel independent w.r.t.\ their states, which we refer to as \emph{state parallel independence}.

%\begin{definition}[State Parallel Independence]
%Let $\rz$ and $\ro$ be \LTS transformation rules. Then direct derivations $\graphresult_0 \Leftarrow_{\rz,m_0} \graph \Rightarrow_{\ro, m_1} \graphresult_1$ are \emph{state parallel independent} iff
%\[
%m_{0,\states{}}(\Llts^\rz) \cap m_{1,\states{}}(\Llts^\ro) \subseteq m_{0,\states{}}(\Llts^\rz \cap \Rlts^\rz) \cap m_{1,\states{}}(\Llts^\ro \cap \Rlts^\ro)
%\]
%\end{definition}
%
%\begin{lemma}
%Let $\rz$ and $\ro$ be \LTS transformation rules. Then direct derivations $\graphresult_0 \Leftarrow_{\rz,m_0} \graph \Rightarrow_{\ro, m_1} \graphresult_1$ are parallel independent iff they are state parallel independent.
%\end{lemma}
%\begin{proof}
%\begin{itemize}
%\item[$(\Longrightarrow)$] Follows trivially from Def.~\ref{def:ltsmorphism}.
%\item[$(\Longleftarrow)$] 
%\end{itemize}
%\end{proof}

In this section, we focus on how to construct a suitable $\graph$ efficiently. First, we establish that in order to have a conflict in $\graph$, $m_0(\Llts^\rz)$ and $m_1(\Llts^\ro)$ must at least overlap in one transition. This relies on the fact that our \LTSs $\graph$ are weakly connected. If a rule specifies that all states matched on a left pattern state $s$ should be removed, then so must all transitions that connect with those states. By the fact that such transitions always exist ($\graph$ is weakly connected) and Def.~\ref{def:lts-transformation:transmatch}, it follows that the rule must also specify explicitly that these transitions must be removed. Hence, a conflict between rules concerning the removal of states also must involve the removal of transitions.

%two direct transformations must be parallel independent if there are no transitions in the interfaces of the rules with the same label. In those cases, it suffices to only analyse the action sets $\actions{\Llts^\rz}$ and $\actions{\Llts^\ro}$. After this, we will discuss the conditions under which two pattern states $s \in \states{\Llts^\rz}$, $t \in \states{\Llts^\ro}$ can be related, in the sense that it is possible to construct an \LTS $\graph$ containing a state $p$ on which both states can be matched. If these conditions are met, then there is potential for a conflict between rules $\rz$, $\ro$, since there is a possible overlap between conceivable matches.

%First, given direct derivations $\graphresult_0 \Leftarrow_{\rz,m_0} \graph \Rightarrow_{\ro, m_1} \graphresult_1$, if we know that there are no transitions in $m_0(\Llts^\rz) \cap m_1(\Llts^\ro)$, then we can conclude that these derivations are parallel independent. This is stated in the following lemma.

\begin{lemma}
\label{lem:parind}
Direct transformations $\graphresult_0 \Leftarrow_{\rz,m_0} \graph \Rightarrow_{\ro, m_1} \graphresult_1$ are parallel independent iff
\[
\transitions{m_0(\Llts^\rz) \cap m_1(\Llts^\ro)} \subseteq \transitions{m_0(\Klts^\rz) \cap m_1(\Klts^\ro)}
\]
\end{lemma}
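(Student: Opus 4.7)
The statement has an iff structure, so the plan is to treat the two directions separately. The forward direction $(\Rightarrow)$ is immediate: the parallel-independence inclusion of Def.~\ref{def:parind} is an inclusion of sub-LTSs, which in particular entails an inclusion of their transition sets.

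For the reverse direction $(\Leftarrow)$ I plan a proof by contradiction that exploits the gluing condition (Def.~\ref{def:lts-transformation:transmatch}) together with weak connectedness of left patterns. Suppose the transition inclusion holds but Def.~\ref{def:parind} fails; then some state $s \in \states{m_0(\Llts^\rz) \cap m_1(\Llts^\ro)}$ is missing from $\states{m_0(\Klts^\rz) \cap m_1(\Klts^\ro)}$, so WLOG $s \notin \states{m_0(\Klts^\rz)}$. Setting $\hat{s}_0 := m_0^{-1}(s)$, we have $\hat{s}_0 \in \states{\Llts^\rz \setminus \Klts^\rz}$, i.e.\ $\rz$ is scheduled to delete $s$. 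Since $\Llts^\ro$ is weakly connected (Def.~\ref{def:lts-transformation:transformationrule}) and contains $\hat{s}_1 := m_1^{-1}(s)$, in the non-degenerate case $\hat{s}_1$ has an incident transition in $\Llts^\ro$, which maps under $m_1$ to a transition $t \in \transitions{m_1(\Llts^\ro)}$ incident to $s$ in $\graph$. The gluing condition of Def.~\ref{def:lts-transformation:transmatch} applied to $m_0$ at $\hat{s}_0$ then forces $t$ to have a preimage in $\Llts^\rz$ as well, so $t \in \transitions{m_0(\Llts^\rz)}$, and therefore $t \in \transitions{m_0(\Llts^\rz) \cap m_1(\Llts^\ro)}$. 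The hypothesis now yields $t \in \transitions{m_0(\Klts^\rz) \cap m_1(\Klts^\ro)}$, and since both endpoints of a transition of $m_0(\Klts^\rz)$ must be states of $m_0(\Klts^\rz)$, we conclude $s \in \states{m_0(\Klts^\rz)}$, contradicting the choice of $s$.

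The hard part is the degenerate case where $\Llts^\ro$ is a single state with no transitions, so no incident $t$ can be produced. Here I would invoke the well-specified assumption on rules stated after Def.~\ref{def:critpair}: if $\hat{s}_1 \notin \Klts^\ro$, then $\ro$ would be deleting $s$ and introducing a fresh replacement without transforming any connected transitions, which is explicitly excluded; and if $\hat{s}_1 \in \Klts^\ro$, then the same transition-finding argument can be carried out with the roles of $m_0$ and $m_1$ swapped, using weak connectedness of $\Llts^\rz$ and the gluing condition at $\hat{s}_0$ (which remains a deleted state). If $\Llts^\rz$ happens to also be a single state in this sub-case, then weak connectedness of $\graph$ pins it down to $\graph = \{s\}$, and the hypothesis and conclusion coincide.
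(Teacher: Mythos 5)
Your proof is correct and follows essentially the same route as the paper's: the forward direction is the trivial sub-LTS-inclusion-implies-transition-inclusion observation, and the reverse direction argues by contradiction from a state $s$ in the overlap but outside the interface overlap, using weak connectedness to produce an incident transition and the gluing condition of Def.~\ref{def:lts-transformation:transmatch} to place that transition in both $m_0(\Llts^{\rz})$ and $m_1(\Llts^{\ro})$, exactly as the paper does. The only divergences are in the endgame --- you conclude via ``a transition of $m_0(\Klts^{\rz})$ has its endpoints in $\states{m_0(\Klts^{\rz})}$'' where the paper concludes via ``the transition is scheduled for removal, hence lies in neither interface image'' --- and in your explicit treatment of the single-isolated-state degenerate case, where your roles-swapped patch for the sub-case $\hat{s}_1 \in \states{\Klts^{\ro}}$ does not actually yield a transition lying in $m_1(\Llts^{\ro})$ (that pattern has no transitions to glue against); the paper's own proof is equally fragile there, since it infers \emph{both} $s \notin \states{\Klts^{\rz}}$ \emph{and} $t \notin \states{\Klts^{\ro}}$ from the failure of membership in an intersection, which only licenses a disjunction.
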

\begin{proof}
The \emph{if} case is trivial. If the \LTS $m_0(\Llts^\rz) \cap m_1(\Llts^\ro)$ is contained in the \LTS $m_0(\Klts^\rz) \cap m_1(\Klts^\ro)$, then all transitions of $m_0(\Llts^\rz) \cap m_1(\Llts^\ro)$ are contained in $m_0(\Klts^\rz) \cap m_1(\Klts^\ro)$.

For the \emph{only if} case, we reason towards a contradiction. Assume that the direct transformations are not parallel independent, i.e.\ $m_0(\Llts^\rz) \cap m_1(\Llts^\ro) \not\subseteq m_0(\Klts^\rz) \cap m_1(\Klts^\ro)$, but that $\transitions{m_0(\Llts^\rz) \cap m_1(\Llts^\ro)} \subseteq \transitions{m_0(\Klts^\rz) \cap m_1(\Klts^\ro)}$.
%\begin{enumerate}
%\item $\states{m_0(\Llts^\rz \setminus \Klts^\rz)} \cap\ \states{m_1(\Llts^\ro \setminus \Klts^\ro)} = \emptyset$. Then, we have $m_0(\Llts^\rz \setminus \Klts^\rz) \cap\ m_1(\Llts^\ro \setminus \Klts^\ro) = \emptyset$. Note that in general, for two sets $S_0$ and $S_1$, we have that $S_0 = (S_0 \setminus S_1) \cup (S_0 \cap S_1)$. Therefore, we also have that $\Llts^\rz \cap \Llts^\ro$ = $((\Llts^\rz \setminus \Rlts^\rz) \cup (\Llts^\rz \cap \Rlts^\rz)) \cap ((\Llts^\ro \setminus \Rlts^\ro) \cup (\Llts^\ro \cap \Rlts^\ro))$. Since $m_0$ and $m_1$ are injective, it follows that $m_0(\Llts^\rz) \cap m_1(\Llts^\ro)$ = $(m_0(\Llts^\rz \setminus \Rlts^\rz) \cup m_0(\Llts^\rz \cap \Rlts^\rz)) \cap (m_1(\Llts^\ro \setminus \Rlts^\ro) \cup m_1(\Llts^\ro \cap \Rlts^\ro))$. Since $m_0(\Llts^\rz \setminus \Rlts^\rz) \cap\ m_1(\Llts^\ro \setminus \Rlts^\ro) = \emptyset$, we have that $(m_0(\Llts^\rz \setminus \Rlts^\rz) \cup m_0(\Llts^\rz \cap \Rlts^\rz)) \cap (m_1(\Llts^\ro \setminus \Rlts^\ro) \cup m_1(\Llts^\ro \cap \Rlts^\ro))$ = $m_0(\Llts^\rz \cap \Rlts^\rz) \cap m_1(\Llts^\ro \cap \Rlts^\ro)$. But then, $m_0(\Llts^\rz) \cap m_1(\Llts^\ro) =  m_0(\Llts^\rz \cap \Rlts^\rz) \cap m_1(\Llts^\ro \cap \Rlts^\ro) \subseteq m_0(\Llts^\rz \cap \Rlts^\rz) \cap m_1(\Llts^\ro \cap \Rlts^\ro)$, so $\rz$ and $\ro$ are clearly parallel independent.
Then, we must have that $\states{m_0(\Llts^\rz) \cap m_1(\Llts^\ro)} \not\subseteq \states{m_0(\Klts^\rz) \cap m_1(\Klts^\ro)}$. We will prove that this cannot be the case by reasoning towards a contradiction. Let $p \in \states{m_0(\Llts^\rz) \cap m_1(\Llts^\ro)}$ and $p \not\in \states{m_0(\Klts^\rz) \cap m_1(\Klts^\ro)}$. Then, there must exist $s \in \states{\Llts^\rz}$ with $m_{0,\states{}}(s) = p$ and $t \in \states{\Llts^\ro}$ with $m_{1,\states{}}(t) = p$. Since $p \not\in \states{m_0(\Klts^\rz) \cap m_1(\Klts^\ro)}$, we have $s \not\in \states{\Klts^\rz}$ and $t \not\in \states{\Klts^\ro}$. Because $\graph$ is weakly connected and $\ro$ is well-specified, $p$ must have at least one in- or outgoing transition which will be removed by the direct transformation $\graph \Rightarrow_{\ro, m_1} \graphresult_1$. Let us assume that this is an incoming transition $\hat p \xrightarrow{a}_{\graph} p$ (the case of an outgoing transition is similar). Since $m_1(t) = p$ and $t \in \states{\Llts^\ro \setminus \Klts^\ro}$, by Def.~\ref{def:lts-transformation:transmatch}, there must be a transition $\hat t \xrightarrow{a}_{\Llts^\ro} t$, with $m_{1,\states{}}(\hat t) = \hat p$, and $m_{1,\transitions{}}(\hat t \xrightarrow{a}_{\Llts^\ro} t) = \hat p \xrightarrow{a}_{\graph} p$. Similarly, there must be an $\hat s \in \states{\Llts^\rz}$ with $\hat s \xrightarrow{a}_{\Llts^\rz} s$, $m_{0,\states{}}(\hat s) = \hat p$ and $m_{0,\transitions{}}(\hat s \xrightarrow{a}_{\Llts^\rz} s) = \hat p \xrightarrow{a}_{\graph} p$. This means that both $\hat p \xrightarrow{a}_{m_0(\Llts^\rz)} p$ and $\hat p \xrightarrow{a}_{m_1(\Llts^\ro)} p$. Also, since $\hat p \xrightarrow{a}_{\graph} p$ is set for removal, we must have $\langle \hat p, a, p \rangle \not\in \transitions{m_0(\Klts^\rz)}$ and $\langle \hat p, a, p \rangle \not\in \transitions{m_1(\Klts^\ro)}$. But then, $\transitions{m_0(\Llts^\rz) \cap m_1(\Llts^\ro)} \not\subseteq \transitions{m_0(\Klts^\rz) \cap m_1(\Klts^\ro)}$, and we have a contradiction.%\qedhere
%\end{enumerate}
\end{proof}
%In our approach, we will use a layered procedure to determine local confluence of rules at different levels of precision. First, using the previous lemmas, we focus solely on the transition labels. For this, we use the following observation. {\bf example disjoint action sets}.
%For a rule $r$, let $\intactions{r} = \{ a \in \actions{\Rlts^r} \mid \neg \exists \langle s, a, s' \rangle \in \transitions{\Llts^r \setminus \Rlts^r} \}$.

The following lemma expresses that parallel independence of rules $\rz$, $\ro$ can be concluded if the sets of transition labels of $\rz$ and $\ro$ satisfy certain conditions. We use $\actions{\Llts \mid \Klts}$ to refer to the labels for which there exists at least one transition in $\Llts$ that is not represented in $\Klts$, i.e.\ there exists an $s \xrightarrow{a}_\Llts s'$ for which $f_{\transitions{}}^{-1}(s \xrightarrow{a}_\Llts s') = \perp$.

\begin{figure}[t]
\centering
\scalebox{0.84}{
\begin{gpicture}(100,100)(0,0)

	\node[Nw=34,Nh=20](box)(55,12){}

	\node[Nw=2,Nh=2,linecolor=white](c0)(55,25){$\rz$}

	\node[Nw=2,Nh=2,linecolor=white](t1)(42,20){0}
	\node[Nw=2,Nh=2,fillcolor=black](p00)(45,20){}
	\node[Nw=2,Nh=2,fillcolor=black](p01)(45,12.5){}
	\node[Nw=2,Nh=2,linecolor=white](t1)(42,12.5){1}
	\node[Nw=2,Nh=2,fillcolor=black](p02)(45,5){}
	\node[Nw=2,Nh=2,linecolor=white](t2)(42,5){2}
	\drawedge[linewidth=0.5,AHangle=30](p00,p01){{\bf a}}
	\drawedge(p01,p02){b}

	%to
	\drawline[AHnb=2,AHangle=30,AHLength=4,AHlength=0,AHdist=2,ATnb=2,ATangle=150,
					ATLength=2,ATlength=0,ATdist=2](55,12)(58,12)

	\node[Nw=2,Nh=2,linecolor=white](t1)(62,20){0}
	\node[Nw=2,Nh=2,fillcolor=black](p10)(65,20){}
	\node[Nw=2,Nh=2,fillcolor=black](p11)(65,12.5){}
	\node[Nw=2,Nh=2,linecolor=white](t1)(62,12.5){1}
	\node[Nw=2,Nh=2,fillcolor=black](p12)(65,5){}
	\node[Nw=2,Nh=2,linecolor=white](t2)(62,5){2}
	\drawedge[linewidth=0.5,AHangle=30](p10,p11){{\bf a}}
	\drawedge(p11,p12){{c}}

	% rule 2
	\node[Nw=34,Nh=20](box)(95,12){}

	\node[Nw=2,Nh=2,linecolor=white](c0)(95,25){$\ro$}

	\node[Nw=2,Nh=2,linecolor=white](t1)(82,20){0}
	\node[Nw=2,Nh=2,fillcolor=black](p00)(85,20){}
	\node[Nw=2,Nh=2,fillcolor=black](p01)(85,12.5){}
	\node[Nw=2,Nh=2,linecolor=white](t1)(82,12.5){1}
	\node[Nw=2,Nh=2,fillcolor=black](p02)(85,5){}
	\node[Nw=2,Nh=2,linecolor=white](t2)(82,5){2}
	\drawedge[linewidth=0.5,AHangle=30](p00,p01){{\bf a}}
	\drawedge(p01,p02){d}
	\drawedge[curvedepth=-6](p02,p00){a}

	%to
	\drawline[AHnb=2,AHangle=30,AHLength=4,AHlength=0,AHdist=2,ATnb=2,ATangle=150,
					ATLength=2,ATlength=0,ATdist=2](95,12)(98,12)

	\node[Nw=2,Nh=2,linecolor=white](t1)(102,20){0}
	\node[Nw=2,Nh=2,fillcolor=black](p10)(105,20){}
	\node[Nw=2,Nh=2,fillcolor=black](p11)(105,12.5){}
	\node[Nw=2,Nh=2,linecolor=white](t1)(102,12.5){1}
	\node[Nw=2,Nh=2,fillcolor=black](p12)(105,5){}
	\node[Nw=2,Nh=2,linecolor=white](t2)(102,5){2}
	\drawedge[linewidth=0.5,AHangle=30](p10,p11){{\bf a}}
	\drawedge(p11,p12){{e}}
	
	\node[Nw=2,Nh=2,linecolor=white](g)(145,25){$\graph$}

	\node[Nw=2,Nh=2](m00)(145,20){}
	\node[Nw=2,Nh=2](m01)(145,12.5){}
	\node[Nw=2,Nh=2](m02)(145,5){}
	\node[Nw=2,Nh=2](m03)(155,20){}
	\drawedge(m00,m01){a}
	\drawedge(m01,m02){d}
	\drawedge[curvedepth=-6](m02,m00){a}
	\drawedge(m00,m03){b}

	% arrowleft
	\node[Nw=2,Nh=2,linecolor=white](t1)(137,15){$\Leftarrow$}
	\node[Nw=2,Nh=2,linecolor=white](t1)(137,17.5){$\rz$}

	\node[Nw=2,Nh=2](m00)(120,20){}
	\node[Nw=2,Nh=2](m01)(120,12.5){}
	\node[Nw=2,Nh=2](m02)(120,5){}
	\node[Nw=2,Nh=2](m03)(130,20){}
	\drawedge(m00,m01){a}
	\drawedge(m01,m02){d}
	\drawedge[curvedepth=-6](m02,m00){a}
	\drawedge(m00,m03){c}
	
	% arrowright
	\node[Nw=2,Nh=2,linecolor=white](t1)(160,15){$\Rightarrow$}
	\node[Nw=2,Nh=2,linecolor=white](t1)(160,17.5){$\ro$}

	\node[Nw=2,Nh=2](m00)(167,20){}
	\node[Nw=2,Nh=2](m01)(167,12.5){}
	\node[Nw=2,Nh=2](m02)(167,5){}
	\node[Nw=2,Nh=2](m03)(177,20){}
	\drawedge(m00,m01){a}
	\drawedge(m01,m02){e}
	\drawedge(m00,m03){b}
\end{gpicture}
}
\caption{Two rules $\rz$ and $\ro$ with a possible conflict situation $\graph$}
\label{fig:conflictexample}
\end{figure}

\begin{lemma}
\label{lem:actionsets}
Direct transformations $\graphresult_0 \Leftarrow_{\rz,m_0} \graph \Rightarrow_{\ro, m_1} \graphresult_1$ are parallel independent if $\actions{\Llts^\rz \mid \Klts^\rz} \cap \actions{\Llts^\ro} = \emptyset$ and $\actions{\Llts^\ro \mid \Klts^\ro} \cap \actions{\Llts^\rz} = \emptyset$.
\end{lemma}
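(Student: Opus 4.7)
The plan is to reduce to Lemma~\ref{lem:parind} and then derive a contradiction from the two label-disjointness hypotheses. By that lemma, to conclude parallel independence it suffices to show
\[
\transitions{m_0(\Llts^\rz) \cap m_1(\Llts^\ro)} \subseteq \transitions{m_0(\Klts^\rz) \cap m_1(\Klts^\ro)}.
\]
So I would fix an arbitrary transition $t = \hat p \xrightarrow{a}_\graph p$ in the left-hand set and prove it lies in the right-hand set by establishing membership in each of $\transitions{m_0(\Klts^\rz)}$ and $\transitions{m_1(\Klts^\ro)}$ separately.

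The first step is the easy observation that, since \LTS morphisms preserve transition labels by Def.~\ref{def:ltsmorphism}, $t \in \transitions{m_0(\Llts^\rz)}$ and $t \in \transitions{m_1(\Llts^\ro)}$ force $a \in \actions{\Llts^\rz} \cap \actions{\Llts^\ro}$. For the main step, I would argue $t \in \transitions{m_0(\Klts^\rz)}$ by contradiction: if not, then by injectivity of $m_0$ the unique preimage of $t$ under $m_{0,\transitions{}}$ is a transition of $\Llts^\rz$, labelled $a$, whose image under $f$ is undefined. Unfolding the definition of $\actions{\Llts^\rz \mid \Klts^\rz}$ this gives $a \in \actions{\Llts^\rz \mid \Klts^\rz}$, which together with $a \in \actions{\Llts^\ro}$ contradicts the first hypothesis $\actions{\Llts^\rz \mid \Klts^\rz} \cap \actions{\Llts^\ro} = \emptyset$. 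A symmetric argument using the second hypothesis yields $t \in \transitions{m_1(\Klts^\ro)}$, completing the inclusion.

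The main (modest) obstacle is the bookkeeping between the image \LTSs $m_i(\Klts)$ that live inside $\graph$ and the pattern \LTSs $\Klts$ in which the label sets $\actions{\Llts \mid \Klts}$ are defined; this is where injectivity of the matches and strict label preservation by morphisms are crucially used to translate "not in $m_0(\Klts^\rz)$" into a statement about a deleted transition label of $\Llts^\rz$. Everything else is a direct unfolding of definitions, and unlike the proof of Lemma~\ref{lem:parind} no delicate state-level reasoning with Def.~\ref{def:lts-transformation:transmatch} is needed, since the label-disjointness already prevents any shared transition from being removed on either side.
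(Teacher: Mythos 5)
Your proposal is correct and follows essentially the same route as the paper: both reduce to Lemma~\ref{lem:parind} and observe that a transition shared by the two matches but absent from one of the interface images yields a label in $\actions{\Llts^\rz \mid \Klts^\rz} \cap \actions{\Llts^\ro}$ (or its symmetric counterpart), contradicting the hypotheses. Your direct, two-sided version is in fact slightly tidier than the paper's contradiction argument, which only spells out the case where the transition fails to lie in $m_0(\Klts^\rz)$.
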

\begin{proof}
By reasoning towards a contradiction. Assume that $(\actions{\Llts^\rz \mid \Klts^\rz}) \cap \actions{\Llts^\ro} = \emptyset$ and $(\actions{\Llts^\ro \mid \Klts^\ro}) \cap \actions{\Llts^\rz} = \emptyset$, but that the direct transformations are not parallel independent. From Lemma~\ref{lem:parind}, it follows that $\transitions{m_0(\Llts^\rz) \cap m_1(\Llts^\ro)}$ must be non-empty. So, there must exist a transition $s \xrightarrow{a}_{m_0(\Llts^\rz) \cap m_1(\Llts^\ro)} s'$ that is not in $m_0(\Klts^\rz) \cap m_1(\Klts^\ro)$. From this, it follows that $a \in \actions{\Llts^\rz}$, $a \in \actions{\Llts^\ro}$ and $a \not\in \actions{\Klts^\rz}$. But then, $\actions{\Llts^\rz \mid \Klts^\rz} \cap \actions{\Llts^\ro} \neq \emptyset$, and we have a contradiction.
\end{proof}

Lemma~\ref{lem:actionsets} will be used as a first check in a conflict detection algorithm in the next section. If for two rules, the mentioned intersection of action sets of left patterns is empty, then it is not possible to construct critical pairs. Since this can be checked in linear time, assuming that set membership can be checked in constant time, it helps to avoid more involved conflict detection for many cases in practice.

Consider the example illustrated in Figure~\ref{fig:conflictexample}. For the two rules $\rz$, $\ro$, we have $\actions{\Llts^\rz \mid \Klts^\rz} = \{ b \}$ and $\actions{\Llts^\ro} = \{ a, d \}$, hence $\actions{\Llts^\rz \mid \Klts^\rz} \cap \actions{\Llts^\ro} = \emptyset$, but $\actions{\Llts^\ro \mid \Klts^\ro} = \{ a, d\}$ and $\actions{\Llts^\rz} = \{ a, b \}$, so $\actions{\Llts^\ro \mid \Klts^\ro} \cap \actions{\Llts^\rz} = \{ a \}$. This means that there is potential for a conflict situation, and a valid conflict situation is actually illustrated on the right in Figure~\ref{fig:conflictexample}. In the given \LTS $\graph$, applying the direct transformation defined by $\Llts^\ro$ matched on the lower part of $\graph$ results in an \LTS on which $\Llts^\rz$ can still be matched. However, note that $\Llts^\rz$ can be matched on $\graph$ involving the curved $a$-transition and the $b$-transition. Since $\rz$ removes the matched $a$-transition, this means that the possible match of $\Llts^\ro$ on $\graph$ is removed when applying the direct transformation of $\rz$.

%This, however, does not yet allow to state anything about rules if this is not the case.

Next, we concentrate on constructing minimal conflict situations $\graph$ for pairs of rules $\rz$, $\ro$. We will do so by constructing a relation between states $s \in \states{\Llts^\rz}$, $t \in \states{\Llts^\ro}$ that expresses the potential to match them on the same state $p$ in an arbitrary \LTS. If a non-empty relation can be constructed, then a conflict situation can be derived from it.

Two states $s \in \states{\Llts^\rz}$, $t \in \states{\Llts^\ro}$ can only be matched on the same state $p$ if their in- and outgoing transitions are in some sense compatible. Since $\Llts^\rz$ and $\Llts^\ro$ are weakly connected, we know that $s$ and $t$ must have in- and/or outgoing transitions. To reason about these, we define the notion of a \emph{context \LTS} of a state.

\begin{definition}[Context \LTS]
\label{def:contextlts}
Given an \LTS $\graph$, we say that for a state $p \in \states{\graph}$, the \emph{context \LTS} $\cLTS{\graph}(p) = \langle \states{p}, \actions{p}, \transitions{p} \rangle$ is defined as
follows:
\begin{itemize}[noitemsep]
\item $\states{p} = \{ p' \mid \exists a \in \actions{\graph}. p\xrightarrow{a}_\graph p' \vee p'\xrightarrow{a}_\graph p \} \cup \{ p \}$;
\item $\actions{p} = \{ a \mid \exists p' \in \states{\graph}. p\xrightarrow{a}_\graph p' \vee p'\xrightarrow{a}_\graph p \}$;
\item $\transitions{p} = \{ \langle p, a, p' \rangle \mid p\xrightarrow{a}_\graph p' \} \cup \{ \langle p', a, p \rangle \mid p'\xrightarrow{a}_\graph p \}$.
\end{itemize}
\end{definition}

\begin{figure}[t]
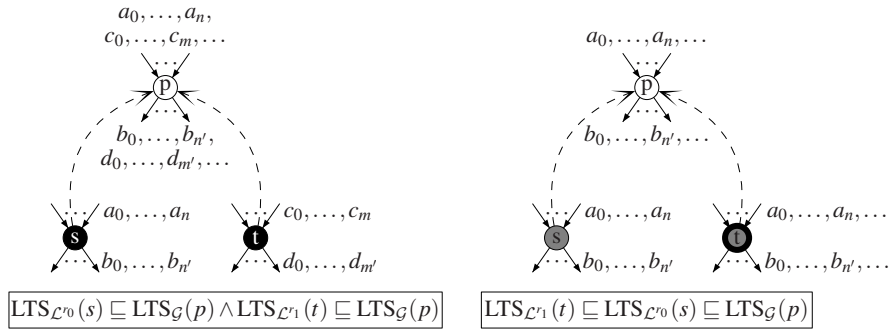

\centering
\scalebox{0.8}{
\begin{gpicture}(119,100)(0,0)
%\put(-8,-138){\framebox(119,160){}}
	\node[Nw=4.0,Nh=4.0,NLangle=0.0,fillcolor=black](gs1)(10,10){\textcolor{white}{s}}
	\imark[iangle=125,ilength=5](gs1)
	\imark[iangle=55,ilength=5](gs1)
	\node[Nw=2,Nh=2,Nmr=0,linecolor=white](tl1)(10.3,14){$\cdots$}
	\node[Nw=5,Nh=3,Nmr=0,linecolor=white](tl1)(22,14){$a_0,\ldots,a_n$}
	\fmark[fangle=305,ilength=5](gs1)
	\node[Nw=2,Nh=2,Nmr=0,linecolor=white](tl1)(10.3,6){$\cdots$}
	\fmark[fangle=235,ilength=5](gs1)
	\node[Nw=5,Nh=3,Nmr=0,linecolor=white](tl1)(22,6){$b_0,\ldots,b_{n'}$}

	\node[Nw=4.0,Nh=4.0,NLangle=0.0,fillcolor=black](gt1)(40,10){\textcolor{white}{t}}
	\imark[iangle=125,ilength=5](gt1)
	\imark[iangle=55,ilength=5](gt1)
	\node[Nw=2,Nh=2,Nmr=0,linecolor=white](tl1)(40.3,14){$\cdots$}
	\node[Nw=5,Nh=3,Nmr=0,linecolor=white](tl1)(52,14){$c_0,\ldots,c_m$}
	\fmark[fangle=305,ilength=5](gt1)
	\node[Nw=2,Nh=2,Nmr=0,linecolor=white](tl1)(40.3,6){$\cdots$}
	\fmark[fangle=235,ilength=5](gt1)
	\node[Nw=5,Nh=3,Nmr=0,linecolor=white](tl1)(52.5,6){$d_0,\ldots,d_{m'}$}

	\node[Nw=4.0,Nh=4.0,NLangle=0.0](gp1)(25,35){p}
	\imark[iangle=125,ilength=5](gp1)
	\imark[iangle=55,ilength=5](gp1)
	\node[Nw=2,Nh=2,Nmr=0,linecolor=white](tl1)(25.3,39){$\cdots$}
	\node[Nw=5,Nh=3,Nmr=0,linecolor=white](tl1)(25,47){$a_0,\ldots,a_n,$}
	\node[Nw=5,Nh=3,Nmr=0,linecolor=white](tl1)(25,43){$c_0,\ldots, c_m, \ldots$}
	\fmark[fangle=305,ilength=5](gp1)
	\node[Nw=2,Nh=2,Nmr=0,linecolor=white](tl1)(25.3,31){$\cdots$}
	\fmark[fangle=235,ilength=5](gp1)
	\node[Nw=5,Nh=3,Nmr=0,linecolor=white](tl1)(25,27){$b_0,\ldots,b_{n'},$}
	\node[Nw=5,Nh=3,Nmr=0,linecolor=white](tl1)(25,23){$d_0, \ldots, d_{m'}, \ldots$}
	
	\drawedge[curvedepth=7,dash={1.5}0,AHlength=1,AHLength=4](gs1,gp1){}
	\drawedge[curvedepth=-7,dash={1.5}0,AHlength=1,AHLength=4](gt1,gp1){}

	% the condition
	\node[Nw=72,Nh=6,Nmr=0](tl1)(35,-2){$\cLTS{\Llts^\rz}(s) \sqsubseteq \cLTS{\graph}(p) \wedge \cLTS{\Llts^\ro}(t) \sqsubseteq \cLTS{\graph}(p)$}
	
	% Non-glue situation
	\node[Nw=4.0,Nh=4.0,NLangle=0.0,fillcolor=gray](ns1)(90,10){s}
	\imark[iangle=125,ilength=5](ns1)
	\imark[iangle=55,ilength=5](ns1)
	\node[Nw=2,Nh=2,Nmr=0,linecolor=white](tl1)(90.3,14){$\cdots$}
	\node[Nw=5,Nh=3,Nmr=0,linecolor=white](tl1)(102,14){$a_0,\ldots,a_n$}
	\fmark[fangle=305,ilength=5](ns1)
	\node[Nw=2,Nh=2,Nmr=0,linecolor=white](tl1)(90.3,6){$\cdots$}
	\fmark[fangle=235,ilength=5](ns1)
	\node[Nw=5,Nh=3,Nmr=0,linecolor=white](tl1)(102,6){$b_0,\ldots,b_{n'}$}

	\node[Nw=4.0,Nh=4.0,NLangle=0.0,fillcolor=gray,linewidth=1](nt1)(120,10){t}
	\imark[iangle=125,ilength=5](nt1)
	\imark[iangle=55,ilength=5](nt1)
	\node[Nw=2,Nh=2,Nmr=0,linecolor=white](tl1)(120.3,14){$\cdots$}
	\node[Nw=5,Nh=3,Nmr=0,linecolor=white](tl1)(135,14){$a_0,\ldots,a_n,\ldots$}
	\fmark[fangle=305,ilength=5](nt1)
	\node[Nw=2,Nh=2,Nmr=0,linecolor=white](tl1)(120.3,6){$\cdots$}
	\fmark[fangle=235,ilength=5](nt1)
	\node[Nw=5,Nh=3,Nmr=0,linecolor=white](tl1)(135,6){$b_0,\ldots,b_{n'},\ldots$}

	\node[Nw=4.0,Nh=4.0,NLangle=0.0](np1)(105,35){p}
	\imark[iangle=125,ilength=5](np1)
	\imark[iangle=55,ilength=5](np1)
	\node[Nw=2,Nh=2,Nmr=0,linecolor=white](tl1)(105.3,39){$\cdots$}
	\node[Nw=5,Nh=3,Nmr=0,linecolor=white](tl1)(105,43){$a_0,\ldots, a_n, \ldots$}
	\fmark[fangle=305,ilength=5](np1)
	\node[Nw=2,Nh=2,Nmr=0,linecolor=white](tl1)(105.3,31){$\cdots$}
	\fmark[fangle=235,ilength=5](np1)
	\node[Nw=5,Nh=3,Nmr=0,linecolor=white](tl1)(105,27){$b_0,\ldots,b_{n'},\ldots$}
	
	\drawedge[curvedepth=7,dash={1.5}0,AHlength=1,AHLength=4](ns1,np1){}
	\drawedge[curvedepth=-7,dash={1.5}0,AHlength=1,AHLength=4](nt1,np1){}

	% the condition
	\node[Nw=55,Nh=6,Nmr=0](tl1)(105,-2){$\cLTS{\Llts^\ro}(t) \sqsubseteq \cLTS{\Llts^\rz}(s) \sqsubseteq \cLTS{\graph}(p)$}

\end{gpicture}
}
\caption{Matching rule left pattern states on the same \LTS states}
\label{fig:mapping}
\end{figure}

Figure~\ref{fig:mapping} shows the conditions under which we are able to match two left pattern states on the same state.
On the left, the case where both states are glue is covered. In the figure, glue-states are coloured black.
The figure expresses that any two glue-states $s$ and $t$ can be matched on a state $p$ as long as $p$ has matchable transitions for all the transitions of $s$ and $t$. Say that state $s$ has $n+1$ incoming  transitions with labels $a_0, \ldots, a_n$, and $n'+1$ outgoing transitions with labels $b_0, \ldots, b_{n'}$, and state $t$ has $m+1$ incoming transitions with labels $c_0, \ldots, c_m$, and $m'+1$ outgoing transitions with labels $d_0, \ldots, d_{m'}$, then $p$ should have matchable transitions with all those labels. Of course, some transitions of $s$ and $t$ may be matched on common transitions of $p$, i.e.\ the two matches together could be non-injective.

Below the figure on the left, this condition is formalised as follows: we must have that $\cLTS{\Llts^\rz}(s) \sqsubseteq \cLTS{\graph}(p) \wedge \cLTS{\Llts^\ro}(t) \sqsubseteq \cLTS{\graph}(p)$. This directly follows from the fact that matches are injective \LTS morphisms (Def.~\ref{def:lts-transformation:transmatch}). %That this is a sufficient condition for $s$ and $t$ to be matchable on $p$ follows from the fact that all the predecessors of $s$ (and $t$), i.e.\ those with outgoing transitions to $s$ ($t$), and all the successors of $s$ ($t$), i.e.\ those with incoming transitions from $s$ ($t$), are also in $\cLTS{\Llts^\rz}(s)$ ($\cLTS{\Llts^\ro}(t)$), and by $\cLTS{\Llts^\rz}(s) \sqsubseteq \cLTS{\graph}(p) \wedge \cLTS{\Llts^\ro}(t) \sqsubseteq \cLTS{\graph}(p)$, matched on states in $\graph$. This means that the conditions of Def.~\ref{def:lts-transformation:transmatch} are not applicable, hence the matches are valid.

On the right in Figure~\ref{fig:mapping}, the condition for the possibility to match two states on the same state is given for the case that at least one of those states is non-glue. In the figure, non-glue states are coloured grey, and state $t$ may be either non-glue or glue. The condition expresses that for all incoming and outgoing transitions of $t$, $s$ must have corresponding transitions with the same label. This is formalised as $\cLTS{\Llts^\ro}(t) \sqsubseteq \cLTS{\Llts^\rz}(s) \sqsubseteq \cLTS{\graph}(p)$. The idea is that if $t$ has transitions that $s$ does not have, then matching $s$ and $t$ on the same state $p$ would not be possible due to the gluing conditions. Again, the fact that $\cLTS{\Llts^\rz}(s) \sqsubseteq \cLTS{\graph}(p)$ and $\cLTS{\Llts^\ro}(t) \sqsubseteq \cLTS{\graph}(p)$ should hold follows from the fact that matches are injective \LTS morphisms. That $\cLTS{\Llts^\ro}(t) \sqsubseteq \cLTS{\Llts^\rz}(s)$ needs to hold follows from the following lemma.

\begin{lemma}
Let $s \in \states{\Llts^\rz \setminus \Klts^\rz}$, $t \in \states{\Llts^\ro}$ be states. Then there can be no matches $m_0: \Llts^\rz \to \graph$, $m_1: \Llts^\ro \to \graph$ on an arbitrary \LTS $\graph$ with $m_{0,\states{}}(s) = p$ and $m_{1,\states{}}(t) = p$ for some $p \in \states{\graph}$ if $\cLTS{\Llts^\ro}(t) \not\sqsubseteq \cLTS{\Llts^\rz}(s)$.
\end{lemma}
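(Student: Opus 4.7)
The plan is to argue by contradiction. Suppose such matches $m_0 : \Llts^\rz \to \graph$ and $m_1 : \Llts^\ro \to \graph$ do exist, with $m_{0,\states{}}(s) = m_{1,\states{}}(t) = p$. The goal is to construct an injective \LTS morphism from $\cLTS{\Llts^\ro}(t)$ to $\cLTS{\Llts^\rz}(s)$, which directly contradicts $\cLTS{\Llts^\ro}(t) \not\sqsubseteq \cLTS{\Llts^\rz}(s)$.

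The first step is to show that the restriction of $m_0$ to $\cLTS{\Llts^\rz}(s)$ is a bijective \LTS morphism onto $\cLTS{\graph}(p)$. Since $m_0$ preserves sources, targets, and labels and $m_{0,\states{}}(s) = p$, every transition of $\cLTS{\Llts^\rz}(s)$ is sent to a transition of $\cLTS{\graph}(p)$, and the neighbours of $s$ to neighbours of $p$; hence the restriction is well-defined into $\cLTS{\graph}(p)$, and it is injective because $m_0$ itself is injective. Surjectivity is the only nontrivial point, and here I invoke the hypothesis $s \in \states{\Llts^\rz \setminus \Klts^\rz}$ together with the gluing conditions of Def.~\ref{def:lts-transformation:transmatch}: for any transition $p \xrightarrow{a}_{\graph} q$ (respectively $q \xrightarrow{a}_{\graph} p$) there must exist $s' \in \states{\Llts^\rz}$ with $s \xrightarrow{a}_{\Llts^\rz} s'$ (respectively $s' \xrightarrow{a}_{\Llts^\rz} s$) and $m_{0,\states{}}(s') = q$, so that transition lies in the image of $m_0$ restricted to $\cLTS{\Llts^\rz}(s)$. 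Therefore the restriction is a bijective \LTS morphism $\cLTS{\Llts^\rz}(s) \to \cLTS{\graph}(p)$, and in particular it admits an inverse which is itself an injective \LTS morphism.

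The second step is to observe, by exactly the same source/target/label preservation argument, that the restriction of $m_1$ to $\cLTS{\Llts^\ro}(t)$ is an injective \LTS morphism $\cLTS{\Llts^\ro}(t) \to \cLTS{\graph}(p)$; note that no gluing/surjectivity argument is needed here because $t$ is not assumed to be non-glue. Composing this with the inverse of $m_0 \restriction \cLTS{\Llts^\rz}(s)$ from the previous step yields an injective \LTS morphism $\cLTS{\Llts^\ro}(t) \to \cLTS{\Llts^\rz}(s)$, i.e.\ $\cLTS{\Llts^\ro}(t) \sqsubseteq \cLTS{\Llts^\rz}(s)$. This contradicts the hypothesis and completes the argument.

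The main obstacle, and the only place where the non-glue assumption on $s$ is genuinely used, is proving surjectivity of $m_0 \restriction \cLTS{\Llts^\rz}(s)$ onto $\cLTS{\graph}(p)$: without $s \in \states{\Llts^\rz \setminus \Klts^\rz}$ the dangling-edge clauses of Def.~\ref{def:lts-transformation:transmatch} do not apply, and $p$ could carry extra transitions in $\graph$ that are invisible from $s$, so the inverse morphism needed to complete the composition would fail to exist. Once surjectivity is secured, the rest of the argument is essentially formal composition of injective morphisms.
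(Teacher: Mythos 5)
Your proof is correct, and it takes a somewhat different route from the paper's. The paper argues by contradiction at the level of a single witness: from $\cLTS{\Llts^\ro}(t) \not\sqsubseteq \cLTS{\Llts^\rz}(s)$ it extracts one transition of $\cLTS{\Llts^\ro}(t)$ that ``cannot be mapped'' into $\cLTS{\Llts^\rz}(s)$, pushes it forward along $m_1$ to a transition $p' \xrightarrow{a}_\graph p$, argues this transition is not in the image of $m_0$, and then contradicts the dangling-edge clause of Def.~\ref{def:lts-transformation:transmatch}. You instead prove the contrapositive globally: the gluing conditions make $m_0$ restricted to $\cLTS{\Llts^\rz}(s)$ a bijection onto $\cLTS{\graph}(p)$ (i.e.\ $\cLTS{\Llts^\rz}(s) \simeq \cLTS{\graph}(p)$), and composing the restriction of $m_1$ with its inverse yields an injective morphism $\cLTS{\Llts^\ro}(t) \to \cLTS{\Llts^\rz}(s)$ outright. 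Both arguments hinge on exactly the same key fact --- that $s$ being non-glue forces every transition incident to $p$ to be covered by $m_0$ --- so the mathematical content is shared; but your version buys a stronger intermediate statement (the isomorphism of context \LTSs) and sidesteps a slightly delicate point in the paper's phrasing, namely that the failure of an \emph{injective} morphism to exist does not by itself single out one individually unmappable transition (two $a$-loops on $t$ versus one on $s$ would already block injectivity transition by transition). Your composition argument never needs such a witness, so it is, if anything, the cleaner of the two.
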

\begin{proof}
By reasoning towards a contradiction.
Assume that $\cLTS{\Llts^\ro}(t) \not\sqsubseteq \cLTS{\Llts^\rz}(s)$ holds, and that we have matches $m_0: \Llts^\rz \to \graph$, $m_1: \Llts^\ro \to \graph$ with $m_{0,\states{}}(s) = p$ and $m_{1,\states{}}(t) = p$. Since $\cLTS{\Llts^\ro}(t) \not\sqsubseteq \cLTS{\Llts^\rz}(s)$ and $\Llts^\rz$ and $\Llts^\ro$ are weakly connected, we must have that at least one transition in $\cLTS{\Llts^\ro}(t)$ cannot be mapped on a transition in $\cLTS{\Llts^\rz}(s)$. Say that this is an incoming transition of $t$. The case that it is an outgoing transition of $t$ is similar. We refer to this transition as $t' \xrightarrow{a}_{\Llts^\ro} t$. Since, by Def.~\ref{def:contextlts}, $\cLTS{\Llts^\ro}(t) \subseteq \Llts^\ro$, and since $m_1$ is an injective \LTS morphism, $m_{1,\states{}}(t')$ and $m_{1,\transitions{}}(t' \xrightarrow{a}_{\Llts^\ro} t)$ must be defined. Let us say that $m_{1,\states{}}(t') = p'$, and $m_{1,\transitions{}}(t' \xrightarrow{a}_{\Llts^\ro} t) = p' \xrightarrow{a}_{\graph} p$. By the fact that $\cLTS{\Llts^\rz}(s)$ contains all the incoming and outgoing transitions of $s$ (Def.~\ref{def:contextlts}), and by the facts that $t'\xrightarrow{a}_{\Llts^\ro} t$ cannot be mapped on a transition in $\cLTS{\Llts^\rz}(s)$ and $\cLTS{\Llts^\rz}(s) \subseteq \Llts^\rz$, it follows that $p'$ and $p' \xrightarrow{a}_\graph p$ are not matched by the \LTS morphism $m_0$. But, since $s \in \states{\Llts^\rz \setminus \Klts^\rz}$ and $p'\xrightarrow{a}_{\graph} m_{0,\states{}}(s)$, by Def.~\ref{def:lts-transformation:transmatch}, we must have that there exists an $s' \in \states{\Llts^\rz}$ such that $m _{0,\states{}}(s') = p'$, and we have a contradiction.
\end{proof}

Note that the condition on the right in Figure~\ref{fig:mapping} implies what needs to hold if both $s$ and $t$ are non-glue. If $s$ is non-glue, we must have that $\cLTS{\Llts^\ro}(t) \sqsubseteq \cLTS{\Llts^\rz}(s) \sqsubseteq \cLTS{\graph}(p)$, but if $t$ is non-glue, we must have $\cLTS{\Llts^\rz}(s) \sqsubseteq \cLTS{\Llts^\ro}(t) \sqsubseteq \cLTS{\graph}(p)$, i.e.\ we must have that $\cLTS{\Llts^\rz}(s) \simeq \cLTS{\Llts^\ro}(t)$.

%\paragraph*{Conflict involving non-glue-states}

Figure~\ref{fig:mapping} gives rise to defining a relation between left patterns of rules, where states $s$ and $t$ are related iff it is conceivable to construct a situation (in the form of an \LTS) in which matches $m_0$ and $m_1$ relate $s$ and $t$ to a common state $p$, and likewise for transitions. Having such a relation, it follows from Lemma~\ref{lem:parind} that if it at least relates two transitions of which at least one is not represented in the interface of the corresponding rule, then the inferred situation is a conflict situation, i.e.\ there are direct transformations that are in conflict. We will use such a relation later on to reason about all possible conflicts involving two given rules, by iterating over all pairs of states from their left patterns.

Next, we define this relation between left patterns, and after that, we explain how a conflict situation can be constructed from a concrete relation.

\begin{definition}[Conflict Compatibility Morphism]
\label{def:conflict}
Let $s \in \states{\Llts^\rz}$, $t \in \states{\Llts^\ro}$. A partial \LTS morphism $f: \Llts^\rz \to \Llts^\ro$ is a \emph{conflict compatibility morphism} if it is injective and $f_{\states{}}(s) = t$ implies that
\begin{itemize}[noitemsep]
%\item Either $s, t \in \states{\Klts^r}$,
%\item If $s \in \states{\Klts^\rz}$ and $t \in \states{\Klts^\ro}$ then
%	\begin{itemize}[noitemsep]
%		\item if $s \xrightarrow{a}_{\Llts^\rz} s'$ and $t \xrightarrow{a}_{\Llts^\ro} t'$, then $f_{\states{}}(s') = t'$, $f_{\transitions{}}(s \xrightarrow{a}_{\Llts^\rz} s') = t \xrightarrow{a}_{\Llts^\ro} t'$;
%		\item if $s \xleftarrow{a}_{\Llts^\rz} s'$ and $t \xleftarrow{a}_{\Llts^\ro} t'$, then $f_{\states{}}(s') = t'$, $f_{\transitions{}}(s \xleftarrow{a}_{\Llts^\rz} s') = t \xleftarrow{a}_{\Llts^\ro} t'$.
%	\end{itemize}
\item If $s \in \states{\Llts^\rz \setminus \Klts^\rz}$ and $t \in \states{\Llts^\ro}$	then
	\begin{itemize}[noitemsep]
		\item if $t \xrightarrow{a}_{\Llts^\ro} t'$ then $s \xrightarrow{a}_{\Llts^\rz} s'$ with $f_{\states{}}(s') = t'$, $f_{\transitions{}}(s \xrightarrow{a}_{\Llts^\rz} s') = t \xrightarrow{a}_{\Llts^\ro} t'$;
		\item if $t \xleftarrow{a}_{\Llts^\ro} t'$ then $s \xleftarrow{a}_{\Llts^\rz} s'$ with $f_{\states{}}(s') = t'$, $f_{\transitions{}}(s \xleftarrow{a}_{\Llts^\rz} s') = t \xleftarrow{a}_{\Llts^\ro} t'$.
	\end{itemize}
\item If $s \in \states{\Llts^\rz}$ and $t \in \states{\Llts^\ro \setminus \Klts^\ro}$	then
	\begin{itemize}[noitemsep]
		\item if $s \xrightarrow{a}_{\Llts^\rz} s'$ then $t \xrightarrow{a}_{\Llts^\ro} t'$ with $f_{\states{}}(s') = t'$, $f_{\transitions{}}(s \xrightarrow{a}_{\Llts^\rz} s') = t \xrightarrow{a}_{\Llts^\ro} t'$;
		\item if $s \xleftarrow{a}_{\Llts^\rz} s'$ then $t \xleftarrow{a}_{\Llts^\ro} t'$ with $f_{\states{}}(s') = t'$, $f_{\transitions{}}(s \xleftarrow{a}_{\Llts^\rz} s') = t \xleftarrow{a}_{\Llts^\ro} t'$.
	\end{itemize}
\item If $s \in \states{\Llts^\rz \setminus \Klts^\rz}$ and $t \in \states{\Llts^\ro \setminus \Klts^\ro}$	then
	\begin{itemize}[noitemsep]
		\item if $s \xrightarrow{a}_{\Llts^\rz} s'$ then $t \xrightarrow{a}_{\Llts^\ro} t'$ with $f_{\states{}}(s') = t'$, $f_{\transitions{}}(s \xrightarrow{a}_{\Llts^\rz} s') = t \xrightarrow{a}_{\Llts^\ro} t'$;
		\item if $s \xleftarrow{a}_{\Llts^\rz} s'$ and $t \xleftarrow{a}_{\Llts^\ro} t'$, then $f_{\states{}}(s') = t'$, $f_{\transitions{}}(s \xleftarrow{a}_{\Llts^\rz} s') = t \xleftarrow{a}_{\Llts^\ro} t'$.
		\item if $t \xrightarrow{a}_{\Llts^\ro} t'$ then $s \xrightarrow{a}_{\Llts^\rz} s'$ with $f_{\states{}}(s') = t'$, $f_{\transitions{}}(s \xrightarrow{a}_{\Llts^\rz} s') = t \xrightarrow{a}_{\Llts^\ro} t'$;
		\item if $s \xleftarrow{a}_{\Llts^\rz} s'$ then $t \xleftarrow{a}_{\Llts^\ro} t'$ with $f_{\states{}}(s') = t'$, $f_{\transitions{}}(s \xleftarrow{a}_{\Llts^\rz} s') = t \xleftarrow{a}_{\Llts^\ro} t'$.
	\end{itemize}
\end{itemize}
\end{definition}

Note that Def.~\ref{def:conflict} does not state anything about the case that both $s$ and $t$ are glue-states. Unlike in the other cases, in which the gluing conditions are relevant because at least one non-glue state is involved, two glue-states can always be related to each other. This means that an \LTS morphism which only relates glue-states and no transitions is also a conflict compatibility morphism. However, by Lemma~\ref{lem:parind}, such a morphism does not directly represent a conflict situation. We are not interested in just any conflict compatibility morphism, but those for which $f_{\transitions{}}$ is defined for some transitions. In fact, given two left patterns $\Llts^\rz$, $\Llts^\ro$ and two states $s \in \states{\Llts^\rz}$, $t \in \states{\Llts^\ro}$, we are interested in the \emph{largest} conflict compatibility morphism $f$ for which $f_{\states{}}(s) = t$, and its \emph{domain of definition}, i.e.\ the part of $\Llts^\rz$ for which $f$ is defined, is a \emph{weakly connected} \LTS. 

\begin{figure}[t]
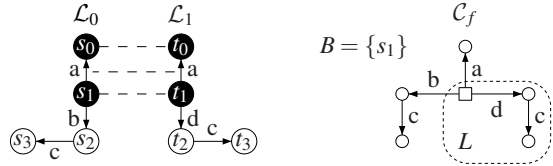

\centering
\scalebox{0.84}{
\begin{gpicture}(100,100)(0,0)
	
	\node[Nw=2,Nh=2,linecolor=white](g)(0,25){$\Llts_0$}

	\node[Nw=4,Nh=4,fillcolor=black](s0)(0,20){\textcolor{white}{$s_0$}}
	\node[Nw=4,Nh=4,fillcolor=black](s1)(0,12.5){\textcolor{white}{$s_1$}}
	\node[Nw=4,Nh=4](s2)(0,5){$s_2$}
	\node[Nw=4,Nh=4](s3)(-10,5){$s_3$}
	\drawedge(s1,s0){a}
	\drawedge[ELside=r](s1,s2){b}
	\drawedge(s2,s3){c}

	\node[Nw=2,Nh=2,linecolor=white](g)(15,25){$\Llts_1$}

	\node[Nw=4,Nh=4,fillcolor=black](t0)(15,20){\textcolor{white}{$t_0$}}
	\node[Nw=4,Nh=4,fillcolor=black](t1)(15,12.5){\textcolor{white}{$t_1$}}
	\node[Nw=4,Nh=4](t2)(15,5){$t_2$}
	\node[Nw=4,Nh=4](t3)(25,5){$t_3$}
	\drawedge[ELside=r](t1,t0){a}
	\drawedge(t1,t2){d}
	\drawedge(t2,t3){c}

	\drawline[AHnb=0,dash={1.5}0](1,16)(14,16)
	\drawedge[AHnb=0,dash={1.5}0](s0,t0){}
	\drawedge[AHnb=0,dash={1.5}0](s1,t1){}

	\node[Nw=2,Nh=2,linecolor=white](g)(44,20){$B = \{ s_1 \}$}

	\node[Nw=2,Nh=2,linecolor=white](g)(60,25){$\conflict_f$}

	\node[Nw=2,Nh=2](t0)(60,20){}
	\node[Nw=2,Nh=2,Nmr=0](t1)(60,12.5){}
	\node[Nw=2,Nh=2](t2)(70,12.5){}
	\node[Nw=2,Nh=2](t3)(70,5){}
	\node[Nw=2,Nh=2](t4)(50,12.5){}
	\node[Nw=2,Nh=2](t5)(50,5){}
	\drawedge[ELside=r](t1,t0){a}
	\drawedge[ELside=r](t1,t2){d}
	\drawedge(t2,t3){c}
	\drawedge[ELside=r](t1,t4){b}
	\drawedge(t4,t5){c}

	\node[Nw=2,Nh=2,linecolor=white](g)(60,5){$L$}
	\node[Nw=17,Nh=13,dash={0.5}0](box)(65,8){}

\end{gpicture}
}
\caption{A conflict compatibility morphism between left patterns $\Llts_0$, $\Llts_1$, and the corresponding conflict situation $\conflict_f$}
\label{fig:conflictcompexample}
\end{figure}

For example, consider the two \LTSs $\Llts_0$ and $\Llts_1$ on the left in Figure~\ref{fig:conflictcompexample}. A conflict compatibility morphism $f$ with $f_{\states{}}(s_1) = t_1$ could be defined without relating any other states and transitions, but it would not represent a conflict. Instead, the largest possible morphism $f$ with $f_{\states{}}(s_1) = t_1$ and a weakly connected domain of definition also relates $s_0$ with $t_0$, and the $a$-transitions. In particular, $s_2 \xrightarrow{c} s_3$ and $t_2 \xrightarrow{c} t_3$ are not related by $f$, since that would make its domain of definition not weakly connected.

Note that for two states $s$, $t$, there can be more than one conflict compatibility morphism of interest, particularly if there are multiple options to relate states and transitions.

In the remainder of this paper, each conflict compatibility morphism is the largest possible with a weakly connected domain of definition, in the sense that no morphism can be constructed that contains it and also has a weakly connected domain of definition.

Given $s \in \states{\Llts^\rz}$ and $t \in \states{\Llts^\ro}$, we can now construct conflict compatibility morphisms $f$. From $f$, $\Llts^\rz$, and $\Llts^\ro$, we can construct a conflict situation. For this, we use $m_0$, $m_1$ to map $\Llts^\rz$ and $\Llts^\ro$ to isomorphic \LTS structures.

\begin{definition}[Conflict Situation]
\label{def:conflictsituation}
Let $f : \Llts^\rz \to \Llts^\ro$ be a conflict compatibility morphism, and $m_0(\Llts^\rz)$, $m_1(\Llts^\ro)$ \LTSs isomorphic to $\Llts^\rz$ and $\Llts^\ro$, respectively. Then, a \emph{conflict situation} \LTS $\conflict_f$ can be constructed as follows: first, determine the boundary $B$ of $f$ consisting of all states $s \in \Llts^\rz$ such that $f_{\states{}}(s)$ is defined, but for some transition $s \xrightarrow{a}_{\Llts^\rz} s'$ (or $s \xleftarrow{a}_{\Llts^\rz} s'$), $f_{\transitions{}}(s \xrightarrow{a}_{\Llts^\rz} s')$ (or $f_{\transitions{}}(s \xleftarrow{a}_{\Llts^\rz} s')$) is not. Then, $L = m_1(\Llts^\ro \setminus f(\Llts^\rz)) \cup m_1(f(B))$ can be glued to $m_0(\Llts^\rz)$ by merging each state $s \in m_0(B)$ with the corresponding state $s' \in m_1(f(B))$. The result is $\conflict_f$.
%, where
%\begin{itemize}[noitemsep]
%\item $\states{\conflict_f} = \states{\Llts^\ro} \cup \{ s \in \states{\Llts^\rz} \mid f_{\states{}}(s)$ is undefined $\}$;
%\item $\actions{\conflict_f} = \actions{\Llts^\rz} \cup \actions{\Llts^\ro}$;
%\item $\transitions{\conflict_f} = \transitions{\Llts^\ro} \cup \{ s \xrightarrow{a}_{\Llts^\rz} s' \mid f_{\states{}}(s)$ is undefined $\wedge f_{\states{}(s')}$ is undefined $\}$
%\end{itemize}
\end{definition}

On the right of Figure~\ref{fig:conflictcompexample}, the conflict situation is presented which results from applying Def.~\ref{def:conflictsituation} on the conflict compatibility morphism between $\Llts_0$ and $\Llts_1$ given on the left in the figure. The boundary $B$ is defined as $B=\{ s_1 \}$, since $f_\states{}(s_1) = t_1$, but $f_\transitions{}(s_1 \xrightarrow{b}_{\Llts_0} s_2) = \perp$. Then, $L$ is the \LTS isomorphic to $\Llts_1$ without $t_0$ and the $a$-transition between $t_0$ and $t_1$ (indicated in $C_f$ in the figure), and $L$ is glued to an \LTS isomorphic to $\Llts_0$ by merging the states related to $s_1$ and $t_1$ (resulting in the square state in the figure).

\section{Conflict Detection and Resolution Algorithms}
\label{sec:alg}
% !TEX root = ltsconfluence.tex

%\setlength{\intextsep}{0pt}
%\begin{wrapalgorithm}[12]{r}{0.66\textwidth}
\begin{algorithm}
\caption{Conflict detection algorithm}
\label{alg:conflict}
{\scriptsize
\begin{algorithmic}[2]
\REQUIRE Rules $\rz = \langle \Llts^\rz, \Rlts^\rz \rangle$, $\ro = \langle \Llts^\ro, \Rlts^\ro \rangle$
\ENSURE Returns set of conflicts between $\rz$ and $\ro$
\STATE $C = \emptyset$
\IF{$\actions{\Llts^\rz \setminus \Klts^\rz} \cap \actions{\Llts^\ro} = \emptyset$ {\bf and} $\actions{\Llts^\ro \setminus \Klts^\ro} \cap \actions{\Llts^\rz} = \emptyset$}
	\RETURN $\emptyset$ \COMMENT{Lemma~\ref{lem:actionsets}}
\ENDIF
\IF{$\Llts^\rz \simeq \Klts^\rz \wedge \Llts^\ro \simeq \Klts^\ro$}
	\RETURN $\emptyset$ \COMMENT{See~\cite{lambers.ehrig.orejas.2005}}
\ENDIF
%\FORALL{$s, s' \in \states{\Llts^\rz \cap \Rlts^\rz}$, $s \xrightarrow{a}_{\Llts^\rz \setminus \Rlts^\rz} s'$}
%	\IF{$\exists t, t' \in \states{\Llts^\ro \cap \Rlts^\ro}. t \xrightarrow{a}_{\Llts^\ro} t'$}
%		\RETURN \TRUE \COMMENT{Lemma~\ref{lem:glueconflict}}
%	\ENDIF
%\ENDFOR
%\FORALL{$s, s' \in \states{\Llts^\ro \cap \Rlts^\ro}$, $t \xrightarrow{a}_{\Llts^\ro \setminus \Rlts^\ro} t'$}
%	\IF{$\exists s, s' \in \states{\Llts^\rz \cap \Rlts^\rz}. s \xrightarrow{a}_{\Llts^\rz} s'$}
%		\RETURN \TRUE \COMMENT{Lemma~\ref{lem:glueconflict}}
%	\ENDIF
%\ENDFOR	
\FORALL{$s \in \states{\Llts^\rz}$, $t \in \states{\Llts^\ro}$}
	\IF {$\actions{\it out}(s) \cap \vectornot{\actions{\it out}}(t) \neq \emptyset$ {\bf or} $\vectornot{\actions{\it out}}(s) \cap \actions{\it out}(t) \neq \emptyset$}
		\FORALL {conflict compatibility morphisms $f: \Llts^\rz \to \Llts^\ro$ with $f_{\states{}}(s) = t$}
			\IF {$f_{\transitions{}}$ is defined for at least one transition}
			\STATE add $\conflict_f$ to $C$ \COMMENT{Definition~\ref{def:conflict}}
			\ENDIF
		\ENDFOR
	\ENDIF
\ENDFOR
\RETURN $C$
\end{algorithmic}
}
\end{algorithm}
%\end{wrapalgorithm}

The findings presented in Section~\ref{sec:confluence} can be used to construct a new conflict detection algorithm. It is presented in Alg.~\ref{alg:conflict}. Given two rules, the algorithm tries to determine whether there can be an \LTS for which it is possible to construct direct transformations that are in conflict. The decision procedures are sorted by their complexity. If full analysis is needed, i.e.\ all possible conflict compatibility morphisms have to be computed, then attempts can be restricted to those pairs of states $s \in \states{\Llts^\rz}$, $t \in \states{\Llts^\ro}$ that share an outgoing transition label, and for at least one of the two states, an outgoing transition with that label will be removed when transforming. This directly follows from Lemma~\ref{lem:parind}, which implies that in order to have a conflict, at least one transition needs to be involved which is matched on by both rules $\rz$ and $\ro$ and removed by at least one of these rules. To formalise this, we use the following notation: $\actions{\it out}(s) = \{ a \in \actions{\Llts^\rz} \mid \exists s \xrightarrow{a}_{\Llts^\rz} s'\}$ is the set of labels of outgoing transitions of $s$, and $\vectornot{\actions{\it out}}(s) = \{ a \in \actions{\Llts^\rz} \mid \exists s \xrightarrow{a}_{\Llts^\rz} s' . f^{-1}_{\transitions{}}(s \xrightarrow{a}_{\Llts^\rz} s') = \perp\}$ is the set of labels of outgoing transitions that are set for removal. At line 2, the action sets are compared, which can be done in $\bigo(|\actions{}|)$ time, with $\actions{} = \actions{\Llts^\rz} \cup \actions{\Llts^\ro}$. At line 4, we use a check from~\cite{lambers.ehrig.orejas.2005}, based on the fact that two non-deleting rules can never be in conflict. This can be determined in $\bigo(|\states{}|+|\transitions{}|)$ time, with $|\states{}|$ and $|\transitions{}|$ the total number of states and transitions in the two left rule patterns together. Next, full checking for conflict compatibility morphisms (lines 7-10) requires worst-case to compare the two left \LTS patterns for all pairs of states, i.e.\ its complexity is $\bigo(|\states{}|^2 \cdot |\transitions{}| \cdot \log |\states{}|)$, since a comparison of two \LTSs can be done in $\bigo(|\transitions{}| \cdot \log |\states{}|)$ time, using the equivalence checking algorithm of Paige \& Tarjan~\cite{paige.tarjan}.

Compared to earlier work, our detection algorithm has a number of advantages. First of all, comparison of the action sets can be done in linear time, and is, unlike other special case optimisations, such as those in~\cite{lambers.ehrig.orejas.2005}, also applicable when both $\rz$ and $\ro$ remove some transitions. Second of all, not all possible pairs of states $s \in \states{\Llts^\rz}$, $t \in \states{\Llts^\ro}$ need to be considered in detail. Just by considering their outgoing transitions first can we quickly resolve many combinations in practice. This exploits the fact that \LTSs are weakly connected, or more specifically, that most states have outgoing transitions.

\begin{figure}[t]
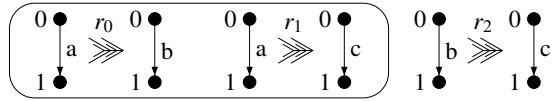

\centering
\scalebox{0.84}{
\begin{gpicture}(100,100)(0,0)

	\node[Nw=60,Nh=15](box)(72,15){}

	%1
	\node[Nw=2,Nh=2,linecolor=white](t1)(47,20){0}
	\node[Nw=2,Nh=2,fillcolor=black](p00)(50,20){}
	\node[Nw=2,Nh=2,fillcolor=black](p01)(50,10){}
	\node[Nw=2,Nh=2,linecolor=white](t1)(47,10){1}
	\drawedge(p00,p01){a}

	%to
	\drawline[AHnb=2,AHangle=30,AHLength=4,AHlength=0,AHdist=2,ATnb=2,ATangle=150,
					ATLength=2,ATlength=0,ATdist=2](57,15)(60,15)
	\node[Nw=2,Nh=2,linecolor=white](t1)(57,19){$\rz$}

	\node[Nw=2,Nh=2,linecolor=white](t1)(62,20){0}
	\node[Nw=2,Nh=2,fillcolor=black](p10)(65,20){}
	\node[Nw=2,Nh=2,fillcolor=black](p12)(65,10){}
	\node[Nw=2,Nh=2,linecolor=white](t1)(62,10){1}
	\drawedge(p10,p12){b}

	%2
	\node[Nw=2,Nh=2,linecolor=white](t1)(77,20){0}
	\node[Nw=2,Nh=2,fillcolor=black](p00)(80,20){}
	\node[Nw=2,Nh=2,fillcolor=black](p01)(80,10){}
	\node[Nw=2,Nh=2,linecolor=white](t1)(77,10){1}
	\drawedge(p00,p01){a}

	%to
	\drawline[AHnb=2,AHangle=30,AHLength=4,AHlength=0,AHdist=2,ATnb=2,ATangle=150,
					ATLength=2,ATlength=0,ATdist=2](87,15)(90,15)
	\node[Nw=2,Nh=2,linecolor=white](t1)(87,19){$\ro$}

	\node[Nw=2,Nh=2,linecolor=white](t1)(92,20){0}
	\node[Nw=2,Nh=2,fillcolor=black](p10)(95,20){}
	\node[Nw=2,Nh=2,fillcolor=black](p12)(95,10){}
	\node[Nw=2,Nh=2,linecolor=white](t1)(92,10){1}
	\drawedge(p10,p12){c}

	%3
	\node[Nw=2,Nh=2,linecolor=white](t1)(107,20){0}
	\node[Nw=2,Nh=2,fillcolor=black](p00)(110,20){}
	\node[Nw=2,Nh=2,fillcolor=black](p01)(110,10){}
	\node[Nw=2,Nh=2,linecolor=white](t1)(107,10){1}
	\drawedge(p00,p01){b}

	%to
	\drawline[AHnb=2,AHangle=30,AHLength=4,AHlength=0,AHdist=2,ATnb=2,ATangle=150,
					ATLength=2,ATlength=0,ATdist=2](117,15)(120,15)
	\node[Nw=2,Nh=2,linecolor=white](t1)(117,19){$r_2$}

	\node[Nw=2,Nh=2,linecolor=white](t1)(122,20){0}
	\node[Nw=2,Nh=2,fillcolor=black](p10)(125,20){}
	\node[Nw=2,Nh=2,fillcolor=black](p12)(125,10){}
	\node[Nw=2,Nh=2,linecolor=white](t1)(122,10){1}
	\drawedge(p10,p12){c}
\end{gpicture}
}
\caption{A Critical Pair may be resolvable}
\label{fig:critresolve}
\end{figure}

It is a known fact in graph transformation that the existence of critical pairs does not guarantee that a transformation system is not confluent. For example, consider the system in Figure~\ref{fig:critresolve}. Rules $\rz$ and $\ro$ are clearly in conflict, since they both concern an $a$-transition in their left pattern. They also define different transformation results, namely a $b$- and a $c$-transition, respectively, so direct transformations on an \LTS $\graph$ consisting of a transition $s \xrightarrow{a}_{\graph} s'$ constitute a critical pair. However, consider that there is a third rule $r_2$ in the system, which transforms $b$-transitions into $c$-transitions. Then $\graph$ can be transformed to an \LTS consisting of a single $c$-transition either by first applying $\rz$ and then $r_2$, or by applying $\ro$. The conflict represented by the critical pair can be resolved.

Another example is the conflict situation in Figure~\ref{fig:conflictexample}. It cannot be resolved if the rule system only consists of rules $r_0$ and $r_1$. Applying first the direct transformation of $r_1$ removes the match of $m_0$ on $\graph$, and applying first the direct transformation of $r_0$, followed by the one of $r_1$ leads to a different \LTS, in which instead of the $b$-transition there is now a $c$-transition.

Since the existence of critical pairs does not mean that a transformation system is not confluent, a necessary condition needs to be found for a critical pair to actually be an example why a system is not confluent. Plump~\cite{plump.confluence2010} demonstrates that for so-called \emph{coverable} transformation systems of hypergraphs, i.e.\ graphs where the edges can be associated with multiple source and target vertices, it suffices to show that all the critical pairs are \emph{strongly joinable}, meaning that independent of which of the two involved direct transformations is applied on the conflict situation, the system can transform the resulting graph to a graph that is structurally equivalent to the graph that can be obtained if the other direct transformation had been applied first. Since \LTSs are a special kind of hypergraph, and since our \LTSs are always coverable (an important criterium is that \LTSs can be extended with a cover consisting of transitions with fresh labels), we can directly take the result from~\cite{plump.confluence2010} for our setting. To define strong joinability formally, we first need to define the notion of a \emph{track morphism}, based on~\cite{plump.confluence05}. Such a morphism explicitly involves relations between states based on the fact that they have been matched by the same interface state. %On top of that, since our patterns are weakly connected, we can use the notion of joinability to try to resolve conflicts.

\begin{definition}[Track Morphism]
Given a direct transformation $\graph \Rightarrow_{r,m} \graphresult$, the \emph{track morphism} $\tr_{\graph \Rightarrow \graphresult}: \graph \to \graphresult$ is the partial \LTS morphism defined by
\[
\tr_{\graph \Rightarrow \graphresult} (s) = \left\{
\begin{array}{ll}
m'_{\states{}}(g_{\states{}}(f^{-1}_{\states{}}(m^{-1}_{\states{}}(s))))             & {\rm if\ } f^{-1}_{\states{}}(m^{-1}_{\states{}}(s)) {\rm\ is\ defined} \\
\perp & {\rm otherwise}
\end{array}
\right.
\]
\end{definition}

The morphisms $m$, $m'$, $f$ and $g$ are as shown in Figure~\ref{fig:DPO}.
Track morphisms can be defined for sequences of direct transformations in a similar way, where for two direct transformations $\graph \Rightarrow \graphresult$ and $\graphresult \Rightarrow \graphresult'$, $\tr_{\graph \Rightarrow \graphresult \Rightarrow \graphresult'}(s)$ is defined as $\tr_{\graphresult \Rightarrow \graphresult'} \circ \tr_{\graph \Rightarrow \graphresult}(s)$.

%\emph{Joinability}, a weaker form of strong joinability, of all criticial pairs does not guarantee confluence. However, in the setting of~\cite{plump.confluence2010}, patterns that are not weakly connected are allowed, and in fact, the examples given to show that joinability is not sufficient involve patterns that are not weakly connected. 
\begin{definition}[Strong Joinability]
\label{def:strongjoin}
Given a transformation system $\Sigma$, a critical pair $\graphresult_0 \Leftarrow_{\rz, m_0} \graph \Rightarrow_{\ro, m_1} \graphresult_1$ is \emph{strongly joinable} if there are derivations $\graphresult_i \Rightarrow^*_\Sigma \Xgraph_i$, for $i = 0, 1$, an isomorphism $f: \Xgraph_0 \to \Xgraph_1$, and for each state $s \in \states{\graph}$, if both $\tr_{\graph \Rightarrow \graphresult_0}(s)$ and $\tr_{\graph \Rightarrow \graphresult_1}(s)$ are defined (that is, $s$ is persisting), then
\begin{enumerate}
\item $\tr_{\graph \Rightarrow \graphresult_0 \Rightarrow^*_\Sigma \Xgraph_0}(s)$ and $\tr_{\graph \Rightarrow \graphresult_1 \Rightarrow^*_\Sigma \Xgraph_1}(s)$ are defined;
\item $f_{\states{}}(\tr_{\graph \Rightarrow \graphresult_0 \Rightarrow^*_\Sigma \Xgraph_0}(s)) = \tr_{\graph \Rightarrow \graphresult_1 \Rightarrow^*_\Sigma \Xgraph_1}(s)$.
\end{enumerate}
\end{definition}

\begin{figure}[t]
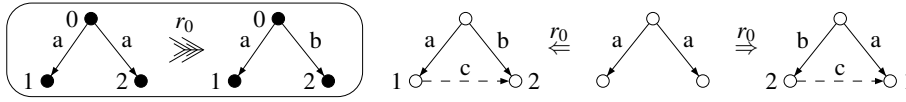

\centering
\scalebox{0.83}{
\begin{gpicture}(100,100)(0,0)

	\node[Nw=57,Nh=15](box)(40,15){}

	%1
	\node[Nw=2,Nh=2,linecolor=white](t1)(22,20){0}
	\node[Nw=2,Nh=2,fillcolor=black](p00)(25,20){}
	\node[Nw=2,Nh=2,fillcolor=black](p01)(33,10){}
	\node[Nw=2,Nh=2,fillcolor=black](p02)(18,10){}
	\node[Nw=2,Nh=2,linecolor=white](t1)(15,10){1}
	\node[Nw=2,Nh=2,linecolor=white](t1)(30,10){2}
	\drawedge(p00,p01){a}
	\drawedge[ELside=r](p00,p02){a}
	
	%to
	\drawline[AHnb=2,AHangle=30,AHLength=4,AHlength=0,AHdist=2,ATnb=2,ATangle=150,
					ATLength=2,ATlength=0,ATdist=2](40,15)(43,15)
	\node[Nw=2,Nh=2,linecolor=white](t1)(40,19){$\rz$}

	\node[Nw=2,Nh=2,linecolor=white](t1)(52,20){0}
	\node[Nw=2,Nh=2,fillcolor=black](p10)(55,20){}
	\node[Nw=2,Nh=2,fillcolor=black](p11)(48,10){}
	\node[Nw=2,Nh=2,fillcolor=black](p12)(63,10){}
	\node[Nw=2,Nh=2,linecolor=white](t1)(45,10){1}
	\node[Nw=2,Nh=2,linecolor=white](t1)(60,10){2}
	\drawedge(p10,p12){b}
	\drawedge[ELside=r](p10,p11){a}

	%left
	\node[Nw=2,Nh=2,linecolor=white](t1)(74,10){1}
	\node[Nw=2,Nh=2,linecolor=white](t1)(96,10){2}
	\node[Nw=2,Nh=2](s00)(85,20){}
	\node[Nw=2,Nh=2](s01)(77,10){}
	\node[Nw=2,Nh=2](s02)(93,10){}
	\drawedge[ELside=r](s00,s01){a}
	\drawedge(s00,s02){b}
	\drawedge[dash={1.5}0](s01,s02){c}

	% arrowleft
	\node[Nw=2,Nh=2,linecolor=white](t1)(100,15){$\Leftarrow$}
	\node[Nw=2,Nh=2,linecolor=white](t1)(100,17.5){$\rz$}

	%middle
	\node[Nw=2,Nh=2](s00)(115,20){}
	\node[Nw=2,Nh=2](s01)(107,10){}
	\node[Nw=2,Nh=2](s02)(123,10){}
	\drawedge[ELside=r](s00,s01){a}
	\drawedge(s00,s02){a}
	%\drawedge[dash={1.5}0](s01,s02){c}

	% arrowright
	\node[Nw=2,Nh=2,linecolor=white](t1)(130,15){$\Rightarrow$}
	\node[Nw=2,Nh=2,linecolor=white](t1)(130,17.5){$\rz$}

	%right
	\node[Nw=2,Nh=2,linecolor=white](t1)(134,10){2}
	\node[Nw=2,Nh=2,linecolor=white](t1)(156,10){1}
	\node[Nw=2,Nh=2](s00)(145,20){}
	\node[Nw=2,Nh=2](s01)(137,10){}
	\node[Nw=2,Nh=2](s02)(153,10){}
	\drawedge[ELside=r](s00,s01){b}
	\drawedge(s00,s02){a}
	\drawedge[dash={1.5}0](s01,s02){c}
	
	%\drawline[AHnb=2,AHangle=30,AHLength=4,AHlength=0,AHdist=1,ATnb=2,ATangle=180,
	%				ATLength=2,ATlength=0,ATdist=1](100,15)(97,15)

\end{gpicture}
}
\caption{The need for strong joinability}
\label{fig:strongjoin}
\end{figure}

Def.~\ref{def:strongjoin} not only expresses that the \LTSs $\Xgraph_0$ and $\Xgraph_1$ need to be isomorphic, but besides that, that the states that \emph{persist} along direct transformations $\graph \Rightarrow_{\rz, m_0} \graphresult_0$, $\graph \Rightarrow_{\ro, m_1} \graphresult_1$, i.e.\ that are matched by glue-states of both $\rz$ and $\ro$, are in the end still present in both $\Xgraph_0$ and $\Xgraph_1$, and relatable to themselves. Consider the example in Figure~\ref{fig:strongjoin}. Rule $\rz$ can be applied in two ways on the given input. The results are isomorphic, but not in a bigger context (the dashed $c$-transition). To detect this, one should compare on which states the glue-states are matched.

Plump~\cite{plump.confluence2010} gives some suggestions how a transformation system can be equipped with a cover to determine whether a critical pair is strongly joinable. Based on that, we use the following approach: for a given critical pair, we define copies of $\rz$ and $\ro$ which we call $\rz^\kappa$ and $\ro^\kappa$, respectively, and we extend both $\Rlts^{\rz^\kappa}$ and $\Rlts^{\ro^\kappa}$ such that for each state $s$ in $\states{\Klts^{\rz^\kappa}}$ (and likewise in $\states{\Klts^{\ro^\kappa}}$), we add a self-loop transition with the fresh, unique label $\kappa$ to $g_{\states{}}(s)$. These selfloops, when the left pattern of $\rz^\kappa$ or $\ro^\kappa$ is matched on part of the conflict situation, are therefore introduced when applying a direct transformation, and then serve the purpose of marking the states that have been matched on glue-states. In this way, we can obtain \LTSs $\graphresult_0^\kappa$ and $\graphresult_1^\kappa$, i.e.\ the \LTSs $\graphresult_0$ and $\graphresult_1$ extended with the $\kappa$-selfloops. Once we have these, we relabel the $\kappa$-selfloops in $\graphresult_0^\kappa$ and $\graphresult_1^\kappa$, such that each state $s$ has a selfloop labelled $\kappa_s$, and after that, remove those $\kappa_s$-selfloops that do not appear in both \LTSs, i.e.\ that are not associated with $s$ in both $\graphresult_0^\kappa$ and $\graphresult_0^\kappa$. We call the resulting \LTSs $\graphresult_0^{\kappa_s}$ and $\graphresult_1^{\kappa_s}$.

Subsequent matches for rules $r \in \Sigma$ can only be established if they do not match a non-glue state on a persisting state, since trying to do so would violate the gluing conditions w.r.t.\ the related $\kappa_s$-selfloop. If it is detected that such a violating `match' can be made, then the critical pair is not strongly joinable, and the transformation fails. Besides that, each time a match has been established, we remove all $\kappa_s$-selfloops of states $s$ that have not been matched by any state.

%This is in line with Def.~\ref{def:strongjoin}, which says that persisting states should remain present once $\graphresult_0 \Rightarrow^*_\Sigma \Xgraph_0$ (or $\graphresult_1 \Rightarrow^*_\Sigma \Xgraph_1$) has been applied.

%\begin{wrapalgorithm}[9]{r}{0.55\textwidth}
\begin{algorithm}[t]
\caption{Conflict resolution algorithm}
\label{alg:confluence}
{\scriptsize
\begin{algorithmic}[2]
\REQUIRE Conflicts in set $C$
\ENSURE Returns $\FALSE$ iff there exists a conflict in $C$ that cannot be resolved, $\TRUE$ otherwise
\FORALL {$(\graphresult_0 \Leftarrow_{\rz, m_0} \conflict_f \Rightarrow_{\ro, m_1} \graphresult_1) \in C$}
	\STATE apply $\conflict_f \Rightarrow_{\rz^\kappa, m_0} \graphresult_0^\kappa$ and $\conflict_f \Rightarrow_{\ro^\kappa, m_1} \graphresult_1^\kappa$
	\STATE compute $\graphresult_0^{\kappa_s}$ and $\graphresult_1^{\kappa_s}$
	\STATE apply $\graphresult_0^{\kappa_s} \Rrightarrow^*_\Sigma \Xgraph_0$ and $\graphresult_1^{\kappa_s} \Rrightarrow^*_\Sigma \Xgraph_1$
	\IF {$\Xgraph_0 \not\simeq \Xgraph_1$ {\bf or transformation failed}}
		\RETURN $\FALSE$
	\ENDIF
\ENDFOR
\RETURN $\TRUE$
\end{algorithmic}
}
\end{algorithm}
%\end{wrapalgorithm}

In this way, states that persist along a sequence of direct transformations will still have their $\kappa_s$-selfloop in both $\Xgraph_0$ and $\Xgraph_1$. Then, it suffices to check that $\Xgraph_0$ and $\Xgraph_1$ are isomorphic.

In Alg.~\ref{alg:confluence}, our conflict resolution algorithm is presented, which can be used to determine, based on the constructed critical pairs, whether a transformation system is locally confluent, by establishing that all critical pairs are strongly joinable. With $\Rrightarrow$, we refer to applying a direct transformation after the removal of $\kappa_s$-selfloops of all the states that are not matched by states of the related transformation rule. The complexity of Alg.~\ref{alg:confluence} depends on the complexity of graph transformation, which is performed in lines 2 and 4, which in turn is dominated by the complexity of finding matches at line 4. In general, the graph matching problem~\cite{dodds.plump.graphtrans} is NP-complete. However, it has been shown in~\cite{dodds.plump.graphtrans} that if the graphs have a root, all states are reachable from that root, and each state has a bounded number $b$ of outgoing transitions, then the complexity is independent of the size of the input graph, instead only depending on $b$ and the number of transitions $n$ in the left pattern of the transformation rule. The complexity is then $\bigo(\Sigma_{i=0}^{n}b^i)$. Since our \LTSs are weakly connected, they meet these requirements. The other operations at lines 3 and 5 in Alg.~\ref{alg:confluence} can be performed in $\bigo(|\states{}|+|\transitions{}|)$, since they require scanning all states and transitions in the \LTSs once.

%The complexity of Alg.~\ref{alg:confluence} largely depends on the complexity of graph transformation, since that operation is repeatedly performed in lines 2 and 4. In general, the graph matching problem is NP-complete. However, it has been shown in~\cite{dodds.plump.graphtrans} that if the graphs have a root, all states are reachable from that root, and each state has a bounded number $b$ of outgoing transition, then the complexity is independent of the size of the input graph, instead only depending on $b$ and the number of transitions $n$ in the left pattern of the transformation rule. The complexity is then $\bigo(\Sigma_{i=0}^{n}b^i)$. Since our \LTSs are weakly connected, they meet these requirements. The other operations at lines 3 and 5 in Alg.~\ref{alg:confluence} can be performed in $\bigo(|\states{}|+|\transitions{}|)$, since they require scanning all states and transitions in the \LTSs once.

It has to be noted that if conflict detection is performed before resolution, all possible critical pairs need to be constructed. If instead, detection and resolution are mixed, i.e.\ each time a new critical pair is detected, it is immediately tested for resolvability, then non-confluent transformation systems can be identified as such as soon as a pair has been found that cannot be resolved. In practice, this means that the construction of all possible critical pairs can often be avoided.

%It is obviously sound, since any constructed critical pair that is not resolvable is a witness for the fact that a transformation system $\Sigma$ is not confluent. Besides that, it is complete for the following reason: 

Following, a proof sketch is given to show correctness of the technique. Consider a transformation system $\Sigma$ that is not confluent. Therefore, there exists an \LTS $\graph$ such that there are two direct transformations $\graphresult_0 \Leftarrow_{\rz, m_0} \graph \Rightarrow_{\ro, m_1} \graphresult_1$ that are not parallel independent. By Lemma~\ref{lem:parind}, this means that there must be at least one transition $x$ in $\graph$, say with label $a$, that is matched on by both $m_0$ and $m_1$, and at least one of the two rules removes $x$. Let $s$ and $t$ be the source states of the transitions $x_s$ and $x_t$ in $\Llts^\rz$ and $\Llts^\ro$ that match on $x$, respectively, and let $\rz$ define that $x$ must be removed. In Alg.~\ref{alg:conflict}, since $a \in \actions{\Llts^\rz \setminus \Klts^\rz} \cap \actions{\Llts^\ro}$, line 3 is skipped, and since $\Llts^\rz \not\simeq \Klts^\rz$, line 5 is skipped. We have $a \in \vectornot{\actions{\it out}}(s) \cap \actions{\it out}(t)$, hence at line 8, conflict compatibility morphisms will be computed with $f_{\states{}}(s) = t$. Since we only consider the largest possible conflict compatibility morphisms with a weakly connected domain of definition, we must have that $f_{\transitions{}}(x_s) = x_t$. If not, then either the source or target states in $\Llts^\rz$ and $\Llts^\ro$ are not relatable via $f$, which would mean that there is a gluing condition violation (Defs.~\ref{def:lts-transformation:transmatch} and \ref{def:conflict}), but that would mean that there can be no overlap of matches of $\rz$ and $\ro$ that involves $x_s$ and $x_t$. This would be in contradiction with the fact that there is a conflict between $\rz$ and $\ro$ involving $x$. By Def.~\ref{def:conflictsituation}, a conflict situation $\conflict_f = m_0(\Llts^\rz) \cup m_1(\Llts^\ro)$ is constructed at line 10 in Alg.~\ref{alg:conflict} with $m_{0,\transitions{}}(x_s)$ representing the overlap between $x_s$ and $x_t$. The subsequent inability to resolve the conflict using Alg.~\ref{alg:confluence} can be proven along the lines of the proof in~\cite{plump.confluence2010}.

The case that a given system is confluent can be proven as follows: in general, Alg.~\ref{alg:conflict} will produce some (possibly zero) conflicts. These conflicts, though, will be resolvable. This can be proven along the lines of the proof in~\cite{plump.confluence2010}.
%Next, conflicts between glue-states are checked at lines 5-10, which can be done in $\bigo(|\transitions{\Llts \setminus \Rlts}| . |\transitions{\Llts}|)$ time, with $\transitions{\Llts \setminus \Rlts}$ and $\transitions{\Llts}$ the appropriate sets of actions. 

\section{Conclusions}
\label{sec:conclusions}
% !TEX root = ltsconfluence.tex

In this paper, we discussed how conflicts in \LTS transformation systems can be efficiently detected and resolved. For the detection, we proposed a novel approach that tries to construct partial morphisms between the involved rule patterns. In particular cases, the absence of conflicts can be determined in linear time, for instance when one rule only removes transitions that another rule will never match on, because it does not refer to the particular transition label(s). This is a big improvement over previous approaches, like e.g.\ in~\cite{lambers.ehrig.orejas.2005}, since it is also applicable for two deleting rules, i.e.\ rules that remove transitions. For the resolution of conflicts, we have proposed an algorithm inspired by~\cite{plump.confluence2010}, but taylored to our particular setting using \LTSs. %Although worst case, the complexity of the conflict detection algorithm is quadratic, for many transformation systems, the detection can be done in linear time.
%For future work, we will analyse joinability of critical pairs, to decide local confluence. %For instance, two transformation rules $s \xrightarrow{a}_{\Llts^\rz} s' \xrightarrow{b}_{\Llts^\rz} s''$ an LTS $p \xrightarrow{a}_\graph p' \xrightarrow{b}_\graph p''$ could be transformed
For future work, we will consider extensions, e.g.~\cite{lambers.ehrig.orejas.2006,heckel.confluence}, to extend our framework in comparable ways. Finally, for formal verification purposes, a hierarchy of different forms of confluence (ranging from strong to weak) has been identified concerning the behaviour described by LTSs~\cite{mateescu.wijs.confluence}. It would be interesting to see how these relate to confluence variants in the setting of graph and model transformation.

%\section{Property-Preservation Checking}
%\label{sec:verif}
%\input{verification}

%\section{Compositional Reasoning and Exclusivity NACs}
%\label{sec:func}
%\input{functionality}

%\paragraph*{An Example: Developing a Distributed System}
%\label{sec:example}
%\input{example}

%\section{The \REFINER Tool}
%\label{sec:impl}
%\input{implementation}

%\section{Implementation and Benchmark Results}
%\label{sec:benchmarks}
%\input{benchmarks}

%\section{Conclusions and Future Work}
%\label{sec:conclusions}
%\input{conclusions}

%\section{Conclusions}
%\input{conclusions}

%\begin{itemize}
%\item Setting + comparison with other tools
%\item graph transformation (double pushout)
%\item divergencies + algorithm
%\item updating divergency information?
%\item experiments?
%\end{itemize}

  \bibliographystyle{eptcs}
  \bibliography{bibliography}

\end{document}